\documentclass[lettersize,journal]{IEEEtran}
\usepackage{algorithm}
\usepackage{algpseudocode}

\usepackage{adjustbox}
\usepackage{algpseudocode}
\usepackage{graphics}
\usepackage{epsfig}
\usepackage{amsmath,amssymb,amsfonts}
\usepackage{amsthm}
\usepackage{graphicx}
\usepackage{textcomp}
\usepackage{xcolor}
\usepackage{flushend}
\usepackage{threeparttable}
\usepackage{verbatim}
\usepackage{enumitem}
\usepackage{graphicx,subfig}
\usepackage{framed}
\usepackage{makecell}

\def\BibTeX{{\rmB\kern-.05em{\sci\kern-.025emb}\kern-.08emT\kern-.1667em\lower.7ex\hbox{E}\kern-.125emX}}

\usepackage[colorlinks,
linkcolor=blue,
anchorcolor=blue,
citecolor=blue]{hyperref}
\usepackage{cleveref}
\usepackage{ragged2e} 
\usepackage{booktabs,makecell, multirow, tabularx}
\usepackage{pifont}
\usepackage{amsmath}
\usepackage{multicol}

\newtheorem{theorem}{Theorem}

\newtheorem{definition}[theorem]{Definition}

\usepackage{enumitem}
\usepackage{listings}

\newcommand{\sample}{\xleftarrow{\$}} 

\newcommand{\etal}{et al.}
\usepackage{pifont} 
\newcommand{\cmark}{\ding{51}} 
\newcommand{\xmark}{\ding{55}} 

\IEEEoverridecommandlockouts

\begin{document}

	\title{PriME-Deal: Privacy-Preserving Bilateral Data Trading with 
		Efficient Matchmaking and Auditable Fair Exchange on Blockchain}
	
	\author{
			Jie Zhang,~\IEEEmembership{Student Member,~IEEE},
			Xiaohong Li,~\IEEEmembership{Member,~IEEE},
			Shanshan Xu, 
			Hanwei Wu,\\
			Ruitao Feng,~\IEEEmembership{Member,~IEEE} and Guangdong Bai,~\IEEEmembership{Member,~IEEE}
			
			\thanks{This work is supported in part by the National Key Research and Development Program of China under Grant 2023YFB3107103, in part by the National Natural Science Foundation of China under Grant 62262073, 62332005, in part by the Beijing–Tianjin–Hebei Natural Science Foundation Joint Cooperation Program under Grant 25JJJJC0034.}
			\thanks{ Jie Zhang, Xiaohong Li and Hanwei Wu are with the College of Intelligence and Computing, Tianjin University, Tianjin, China. (e-mail: \{jackzhang, xiaohongli, wuhanwei\}@tju.edu.cn).}
			\thanks{ Shanshan Xu is with the School of Geographic Sciences, East China Normal University, Shanghai, China. (e-mail: s.xu.ecnu@gmail.com). }
			\thanks{ Hanwei Wu is also with the Information Security, Hainan University, China.}
			\thanks{ Ruitao Feng is with the Faculty of Science and Engineering, Southern Cross University, Australia (e-mail: ruitao.feng@scu.edu.au).}
			\thanks{ Guangdong Bai is with the Department of Computer Science, City University of Hong Kong, Hong Kong, China. (e-mail: baiguangdong@gmail.com).}
			\thanks{ Jie Zhang and Shanshan Xu are contributed equally to this work.}
			\thanks{ Hanwei Wu, Ruitao Feng and Guangdong Bai are the corresponding authors.}
		}
	
	\maketitle
	
	\begin{abstract}
		Bilateral attribute-based access control for data trading must hide policies,
		provide cryptographic fairness, and avoid trusted third parties.
		Existing solutions either leak policy information, incur super-linear costs,
		or rely on trusted dispute resolution.
		We present PriME-Deal, a non-interactive protocol that simultaneously
		achieves \emph{policy-hiding bilateral matching}, \emph{efficient threshold
			access control}, and \emph{auditable fair exchange} on public blockchains.
		The seller embeds a secret token under the buyer policy into an oblivious
		key‑value store with pseudorandom masking; the buyer reconstructs the token
		locally via tag‑based probing, eliminating combinatorial enumeration, and
		proves correctness in zero-knowledge.
		Fair exchange is enforced through a collateralized on‑chain reveal with a
		cryptographic audit that penalizes misbehaviour without trusted parties.
		
		We prove security in the Universal Composability framework under standard
		assumptions.
		Compared with the state-of-the-art threshold fuzzy IB‑ME scheme, the seller’s
		publishing time is reduced by two orders of magnitude (e.g., $8.76$\,s vs.\
		$690$\,s for a policy of 500 attributes).
		For a typical configuration of $(200,20,5)$, the buyer completes token
		reconstruction and proof generation in $8.9$\,s, with the zero-knowledge proof
		taking under $0.6$\,s and remaining constant across all parameter scales.
		The on‑chain cost is approximately $28.6$\,M gas, well within Ethereum’s block
		limit.
		PriME-Deal thus delivers the first practical privacy-preserving data
		trading protocol that combines linear seller overhead, bilateral policy hiding,
		and auditable fairness.
	\end{abstract}

	\IEEEpeerreviewmaketitle

	\section{Introduction}
	\label{sec:introduction}
	
	Decentralized data marketplaces~\cite{subramanian2023decentralized,marjanov2025sok}
	require protocols that simultaneously deliver
	\emph{bilateral attribute‑based access control}, \emph{policy privacy},
	and \emph{provable fair exchange} without trusted intermediaries~\cite{hu2024sok}.
	A seller wants to release an encrypted file only to buyers who possess
	sufficient certified attributes; a buyer will pay only for data from a
	seller meeting her own policy.
	Both parties must keep their attribute sets hidden during matching, and
	the exchange of payment for the decryption key must be atomic~\cite{DBLP:journals/tdsc/ZhangLFXHWB26}.
	Crucially, the protocol should also be \emph{non‑interactive}: the buyer
	must be able to discover and access matching data without requiring the
	seller to be online, enabling asynchronous and large‑scale marketplaces.
	
	Achieving all five goals is fundamentally difficult because of an inherent
	\emph{tension between attribute matching and data encryption}~\cite{WCXM25}.
	Matching requires computing on the two parties’ private attribute sets,
	while encryption must hide the plaintext from anyone who does not satisfy
	the matching condition.
	Prior work forces these two tasks into a single cryptographic layer, and
	this \textbf{matching--encryption coupling} inevitably sacrifices privacy,
	efficiency, fairness, or non‑interactivity.
	
	\noindent\textbf{Why existing paradigms fail.}
	(1) Matchmaking encryption (ME)~\cite{AtenieseFNV21} embeds the access policy
	directly into the ciphertext and the decryption key.
	Threshold variants~\cite{WCXM25,Wu2023Fuzzy,YCQNH26} fuse secret sharing~\cite{Shamir79}
	with pairing‑based IBE~\cite{SW05}, producing per‑attribute ciphertext
	components that leak policy information and cause \emph{quadratic}
	computational cost.
	For a policy of a few hundred attributes the encryption time becomes
	prohibitive.
	(2) Private set intersection (PSI)~\cite{KKRT16,PSWW18}, especially OKVS‑based
	designs~\cite{GPR21,RS21,RR22,BPSY23}, computes threshold intersections
	with linear cost and hides non‑intersecting elements.
	However, PSI is \emph{inherently interactive} and delivers no encrypted
	payload, nor does it authenticate the matching result~\cite{BGMP25}.
	Layering encryption on top of PSI leaks which attributes derived the key,
	undoing privacy~\cite{chielle2025recurrent}.
	(3) Blockchain‑based fair exchange~\cite{SLS18,LSB20,PKHLKO25} enforces atomic
	delivery but assumes the buyer already knows the file hash.
	It does not support attribute‑based discovery and cannot hide the buyer’s
	attributes~\cite{zhang2026qae}.
	
	\noindent\textbf{PriME‑Deal: non‑interactive decoupling.}
	We resolve the deadlock by introducing a \emph{secret token}~$\tau$ that
	cleanly separates attribute matching from file decryption, and by making the
	entire matching phase \emph{non‑interactive}.
	The seller secret‑shares $\tau$ under the buyer policy, masks each share,
	and publishes them in an oblivious key‑value store (OKVS); any buyer can
	later probe this single public structure offline and recover $\tau$ iff she
	meets the threshold---no interaction with the seller is required.
	The token $\tau$ does not decrypt the file; instead, the file key is split,
	and a second component is released only upon fair exchange.
	This architecture simultaneously delivers:
	(i)~\emph{policy‑hiding matching} without interaction,
	(ii)~\emph{linear seller and buyer costs} (no quadratic blow‑up, buyer
	work independent of her attribute set size), and
	(iii)~\emph{auditable fair exchange} via on‑chain dispute resolution
	without trusted parties.
	A zero‑knowledge proof (Groth16~\cite{Groth16}) attests to correct token
	recovery; its cost is \emph{constant} across all attribute scales.
	
	We prove security in the Universal Composability framework under standard
	bilinear and computational assumptions (DBDH, co‑CDH).
	A prototype evaluation shows that PriME-Deal reduces the seller’s
	cost by \emph{two orders of magnitude} compared with the
	state‑of‑the‑art threshold fuzzy IB‑ME~\cite{Wu2023Fuzzy}
	($8.76$\,s vs.\ $690$\,s for $500$ attributes), while keeping the buyer’s
	online time under $9$\,s for typical parameters and on‑chain verification
	gas within Ethereum’s block limit. The complete implementation for PriME is available at: \url{https://anonymous.4open.science/r/PriME-Deal/}.
	
	\noindent\textbf{Contributions.}
	\begin{enumerate}[leftmargin=*]
		\item \textbf{Conceptual.}
		We identify the \emph{matching--encryption coupling} as the central
		barrier to private, fair, and non‑interactive data trading, and
		introduce a token‑based decoupling principle that enables all five goals.
		\item \textbf{Construction.}
		We design PriME-Deal, the first non‑interactive protocol
		that simultaneously hides policies, achieves linear‑time workloads,
		and guarantees auditable fair exchange.
		\item \textbf{Security proof.}
		We provide a UC security analysis with concrete reductions to DBDH
		and co‑CDH.
		\item \textbf{Implementation \& evaluation.}
		We benchmark the full protocol, demonstrating $100\times$ seller
		speedup over the closest prior art and confirming practical buyer
		latency independent of attribute set size.
	\end{enumerate}

	\section{Related Work}
	\label{sec:related-work}
	
	Ateniese~\etal~\cite{AtenieseFNV21} introduced matchmaking encryption (ME) at
	CRYPTO~2019, enabling bilateral identity‑based access control where decryption
	succeeds only for the designated sender–receiver pair.
	Subsequent identity‑based constructions~\cite{francati2021identity,CLWW22,BL23,Sun23PSME}
	remain limited to exact matching.
	To support threshold policies, Wu~\etal~\cite{Wu2023Fuzzy} proposed Fuzzy
	IB‑ME, where identities are attribute sets and decryption requires overlap on
	both sides above a threshold.
	Further extensions add anonymous credentials~\cite{MXNHD23}, unbounded
	universes~\cite{YDY0W24}, lightweight revocability for IoT~\cite{WWCCW26},
	and revocable multi‑receiver IB‑ME~\cite{WFZYC24}.
	Despite these advances, all existing ME schemes embed the access policy
	directly into pairing‑based decryption, making decryption cost at least
	linear in the policy size and leaking partial attribute information.
	PriME‑Deal departs from this paradigm: we encode PRF‑masked Shamir shares of a
	token $\tau$ into a public OKVS; the buyer reconstructs $\tau$ off‑line using
	her attributes, after which file decryption requires only \emph{one}
	constant‑time pairing.
	This decoupling of matching from decryption is the key architectural
	innovation that makes the scheme practical for unbalanced policies.
	
	On the orthogonal front, private set intersection (PSI) enables two parties
	to compute the intersection of their private sets without leaking additional
	information.
	The efficiency landscape was transformed by the linear OKVS for
	PSI~\cite{GPR21} (CRYPTO~2021), which encodes key‑value pairs into a compact
	vector that hides the keys.
	Bienstock~\etal~\cite{BPSY23} pushed the encoding rate to $0.97$, and
	Wu~\etal~\cite{WYY25} combined OKVS with OT extension for a 22--41\%
	speedup over Kolesnikov~\etal~\cite{KKRT16}.
	Although threshold and fuzzy PSI~\cite{CDGOSS21,YCW24,PSTKZ25} relax exact
	matching, all existing PSI variants are inherently \emph{interactive}:
	both parties must engage in multi‑round protocols, and the output is merely a
	set relation without any mechanism to bind the result to a decryptable
	ciphertext.
	In the multi‑party setting, fair PSI via smart contracts~\cite{AS24} proves
	that complete fairness is impossible without a trusted third party,
	motivating the use of economic collateral as in our design.
	PriME‑Deal uses OKVS in a \emph{non‑interactive} manner: the seller encodes
	masked shares once, and any buyer later probes the vector offline with her own
	attributes.
	The double masking guarantees \emph{value‑hiding}: an adversary without the
	attribute decryption key cannot distinguish a decoded value from random.
	Crucially, the reconstructed token $\tau$ is directly bound to a decryptable
	file key, coupling set intersection with encrypted data delivery.
	
	Regarding fair exchange, FairSwap~\cite{SLS18} (CCS~2018) pioneered smart
	contracts as external judges, with buyers submitting Merkle proofs to dispute
	misbehavior.
	OptiSwap~\cite{LSB20} improved the optimistic path and added protection
	against grieving attacks.
	To further guarantee privacy, zkMarket~\cite{PKHLKO25} integrates
	commit‑and‑prove SNARKs with smart contracts, while PVSS~\cite{OLMZK25} and
	SwiftGuard~\cite{ZLZHBF25} use zk‑SNARKs for public verifiability.
	Piper~\etal~\cite{PWH25} combines ZKPs with self‑sovereign identity and
	attribute‑based access control for DeFi, though it does not bind matching
	outcomes to encrypted payloads.
	All these fair exchange protocols, even those enhanced with zero‑knowledge
	proofs, require the buyer to know the file hash before purchasing.
	In a privacy‑preserving data marketplace, the buyer must first \emph{discover}
	compatible data through attribute‑based matching.
	PriME‑Deal bridges this gap by integrating OKVS‑based matching with an
	on‑chain audit mechanism, enabling a genuine \emph{discovery‑then‑purchase}
	workflow, and couples a zk‑SNARK compliance proof directly with the fair
	exchange contract.
	
	Table~\ref{tab:feature-compare} provides a qualitative comparison of
	PriME‑Deal with the most representative related work across six key
	dimensions.
	
	\begin{table}[htbp]
		\centering
		\caption{Qualitative comparison with representative schemes}
		\label{tab:feature-compare}
		\renewcommand{\arraystretch}{1.3}
		\setlength{\tabcolsep}{1.8pt}
		\resizebox{\columnwidth}{!}{%
			\begin{tabular}{lcccccc}
				\toprule
				\textbf{Scheme} &
				\makecell[c]{\textbf{Bilateral}\\ \textbf{Policy}} &
				\makecell[c]{\textbf{Threshold}\\ \textbf{Matching}} &
				\makecell[c]{\textbf{Policy}\\ \textbf{Privacy}} &
				\makecell[c]{\textbf{Non‑inter.}\\ \textbf{Matching}} &
				\makecell[c]{\textbf{Fair}\\ \textbf{Exchange}} &
				\makecell[c]{\textbf{ZK}\\ \textbf{Compl.}}\\
				\midrule
				IB‑ME~\cite{AtenieseFNV21,CLWW22}
				& \cmark & \xmark & \xmark & \cmark & \xmark & \xmark\\
				Fuzzy IB‑ME~\cite{Wu2023Fuzzy}
				& \cmark & \cmark & \cmark\textsuperscript{*} & \cmark & \xmark & \xmark \\
				Fuzzy PSI~\cite{PSTKZ25}
				& \xmark & \cmark & \cmark & \xmark & \xmark & \xmark \\
				FairSwap~\cite{SLS18} / OptiSwap~\cite{LSB20}
				& \xmark & \xmark & \xmark & \cmark\textsuperscript{\#} & \cmark & \xmark \\
				zkMarket~\cite{PKHLKO25}
				& \xmark & \xmark & \xmark & \cmark\textsuperscript{\#} & \cmark & \cmark \\
				Piper~\cite{PWH25}
				& \xmark & \cmark & \cmark & \cmark & \xmark & \cmark \\
				\midrule
				\textbf{PriME‑Deal (Ours)}
				& \cmark & \cmark & \cmark & \cmark & \cmark & \cmark \\
				\bottomrule
			\end{tabular}%
		}
		\begin{flushleft}
			\footnotesize
			\cmark: supported; \xmark: not supported.\\
			\textsuperscript{*}Partial: a matching party learns which attributes belong to the policy.\\
			\textsuperscript{\#}These schemes require the buyer to know the file hash in advance,
			so matching is trivial (exact identifier lookup) and not attribute‑based.
		\end{flushleft}
	\end{table}

	\section{Preliminaries}
	\label{sec:prelim}
	
	We denote by $\lambda$ the security parameter; throughout this work we target
	$\lambda = 128$ bits of symmetric security.  Let $\mathbb{F}_p$ be a finite field
	of prime order $p$ with $p \approx 2^{2\lambda}$, and let $x \sample \mathcal{D}$
	denote uniform sampling from distribution $\mathcal{D}$.  All algorithms are
	probabilistic polynomial-time (PPT) unless otherwise stated.  For a set
	$\mathcal{X}$, $|\mathcal{X}|$ denotes its cardinality.  We write $[n]$ for
	$\{1,\dots,n\}$. See Appendix~\ref{app:table} for an explanation of the iconic symbols used in PriME-Deal.
	
	\subsection{Bilinear Groups and Complexity Assumptions}
	\label{sec:bilinear}
	
	Let $\mathbb{G}, \mathbb{G}_1, \mathbb{G}_T$ be cyclic groups of prime order $p$ and
	$e: \mathbb{G} \times \mathbb{G}_1 \to \mathbb{G}_T$ a non‑degenerate, efficiently
	computable bilinear pairing.  Let $g$ and $g_1$ be generators of $\mathbb{G}$ and
	$\mathbb{G}_1$, respectively.  We instantiate the bilinear group with the
	\textbf{BN254} curve (Type‑3 asymmetric pairing~\cite{AKLGL11}), which provides
	approximately $128$ bits of security and is natively supported by Groth16
	zk‑SNARK libraries.
	
	\begin{definition}[Computational co‑Diffie–Hellman (co‑CDH)]
		Given $g, g_1^a \in \mathbb{G}$ and $g_1 \in \mathbb{G}_1$, no PPT adversary
		can compute $g_1^a \in \mathbb{G}_1$ with more than negligible probability.
	\end{definition}
	
	\begin{definition}[Decisional Bilinear Diffie–Hellman (DBDH)]
		Given $(g^a, g^b, g^c \in \mathbb{G},\; g_1^a, g_1^b, g_1^c \in \mathbb{G}_1)$
		and $Z \in \mathbb{G}_T$, where $a,b,c \sample \mathbb{Z}_p^*$,
		no PPT adversary can distinguish $Z = e(g,g_1)^{abc}$ from a random element
		of $\mathbb{G}_T$ with more than negligible advantage.
	\end{definition}
	
	\begin{definition}[Computational Bilinear Diffie–Hellman (CBDH)]
		Given $g, g^a, g^b, g^c \in \mathbb{G}$ and $g_1 \in \mathbb{G}_1$,
		no PPT adversary can compute $e(g,g_1)^{abc}$ with more than negligible
		probability.
	\end{definition}
	
	\subsection{Hash Functions and Pseudorandom Function}
	\label{sec:hash-prf}
	We use four hash functions, all modeled as random oracles:
	$H_1: \{0,1\}^* \to \mathbb{G}_1$ (full‑domain hash into $\mathbb{G}_1$),
	$H_2: \mathbb{G}_T \to \{0,1\}^{\lambda}$,
	$H_3: \{0,1\}^* \to \mathbb{F}_p$, and
	$H_4: \{0,1\}^* \to \{0,1\}^{\lambda}$.
	
	A pseudorandom function $\mathsf{PRF}: \{0,1\}^\lambda \times \{0,1\}^* \to \mathbb{F}_p$
	is employed such that $\mathsf{PRF}(k,\cdot)$ is computationally indistinguishable
	from a random function when $k \sample \{0,1\}^\lambda$~\cite{KKRT16}. In the protocol, we use two independent PRF keys, denoted $\mathsf{s}_\nu$ and $\mathsf{s}_r$, to provide pseudorandom masking and deterministic encryption randomness, respectively. We instantiate it with HMAC-SHA256, truncating the output modulo $p$,
	and write $\mathsf{PRF}(k, x)$ for the evaluation on input $x$ under key $k$.
	
	\subsection{Authenticated Symmetric Encryption}
	\label{sec:aes}
	We use AES-128-GCM as the authenticated encryption scheme for all file
	encryptions.  The scheme operates with 128-bit keys and consists of:
	
	\begin{itemize}[leftmargin=10px]
		\item $\mathsf{AES.Enc}(K,M) \to C$: samples a fresh 96-bit nonce $N \sample \{0,1\}^{96}$, computes the AES-GCM encryption $(C', T) \gets \text{AES-GCM.Enc}(K, N, M, \varepsilon)$ where $C'$ is the ciphertext, $T$ is the 128-bit authentication tag, $\varepsilon$ is the additional authenticated data and outputs $C \gets N \| C' \| T$.
		\item $\mathsf{AES.Dec}(K,C) \to M$ or $\bot$: parses $C$ as $N \| C' \| T$, computes $(M, \text{valid}) \gets \text{AES-GCM.Dec}(K, N, C', \varepsilon, T)$, and outputs $M$ if $\text{valid} = \text{true}$, otherwise outputs $\bot$.
	\end{itemize}
	
	\noindent
	AES-128-GCM provides IND-CPA confidentiality and INT-CTXT ciphertext integrity, and its 128-bit key size matches our targeted security level
	$\lambda = 128$.  Throughout the protocol, file contents are encrypted with
	$\mathsf{AES.Enc}$ and decrypted with $\mathsf{AES.Dec}$, $\varepsilon$ is the empty string.
	
	\subsection{Linear Oblivious Key-Value Store (OKVS)}
	\label{sec:okvs}
	
	A linear OKVS over a field $\mathbb{F}$ is defined by a deterministic row
	function $\mathsf{row}: \{0,1\}^* \to \mathbb{F}^m$ and two algorithms
	\cite{BPSY23,XLW25}:
	\begin{itemize}[leftmargin=10px]
		\item $\mathsf{Encode}(\{(k_i, v_i)\}_{i=1}^n) \to D$: given $n$
		key-value pairs, outputs a vector $D \in \mathbb{F}^m$ such that
		$\langle \mathsf{row}(k_i), D \rangle = v_i$ for all $i$.
		\item $\mathsf{Decode}(D, k) = \langle \mathsf{row}(k), D \rangle$.
	\end{itemize}
	The parameter $m$ is chosen as $m \approx 1.3\,n$ to guarantee that encoding
	succeeds with overwhelming probability.  \emph{Obliviousness} guarantees that
	$D$ leaks no information about the set of encoded keys beyond its dimension $m$:
	for any two key sets $\mathcal{X}_0, \mathcal{X}_1$ of equal size $n$ and
	uniformly random values $\{v_x\}$, the distributions of
	$\mathsf{Encode}(\{(x, v_x)\}_{x \in \mathcal{X}_0})$ and
	$\mathsf{Encode}(\{(x, v_x)\}_{x \in \mathcal{X}_1})$ are computationally
	indistinguishable.
	
	In PriME-Deal, we encode PRF-masked Shamir shares into the OKVS,
	which guarantees \emph{value-hiding}: even for a key $x \notin \mathcal{X}$,
	$\mathsf{Decode}(D, x)$ is computationally indistinguishable from a random
	field element.  Formally, for any PPT adversary $\mathcal{A}$ that is not given
	the PRF keys, the advantage in distinguishing $\mathsf{Decode}(D, x)$ from a
	uniform element is negligible.
	
	\subsection{$(t,n)$-Shamir Secret Sharing}
	\label{sec:shamir}
	A $(t,n)$-Shamir secret sharing scheme \cite{Shamir79} splits a secret 
	$\tau \in \mathbb{F}_p$ into $n$ shares by choosing a random polynomial 
	$f(x) \in \mathbb{F}_p[x]$ of degree $t-1$ with $f(0) = \tau$ and defining 
	the $i$-th share as $s_i = f(x_i)$, where $\{x_i\}_{i=1}^n$ are distinct, 
	non-zero elements of $\mathbb{F}_p$.
	Any subset of $t$ shares $\{s_i\}_{i \in S}$ (with $S \subset \{1,\dots,n\}$, $|S| = t$) 
	uniquely determines $\tau$ via Lagrange interpolation:
	\[
	\tau = \sum_{i \in S} s_i \cdot \ell_i(0), \quad 
	\ell_i(x) = \prod_{j \in S,\, j \neq i} \frac{x - x_j}{x_i - x_j}.
	\]
	Fewer than $t$ shares reveal no information about $\tau$.
	
	\subsection{Cuckoo Filter}
	\label{sec:cuckoo}
	
	A Cuckoo filter \cite{PSWW18,FanAndersen13cuckoo} is a probabilistic data structure
	supporting approximate set membership queries with tunable false-positive rate
	$\epsilon$.  It provides two operations:
	$\mathsf{CF}.\mathsf{insert}(x)$ inserts an element $x$,
	and $\mathsf{CF}.\mathsf{lookup}(x)$ returns \texttt{true} if $x$ may be in
	the set, \texttt{false} if it is definitely not.
	
	\subsection{Zero-Knowledge Succinct Non-Interactive Argument of Knowledge}
	\label{sec:zk}
	
	A zk-SNARK~\cite{Groth16} for an NP relation $\mathcal{R}$ is a triple
	$(\mathsf{Setup}, \mathsf{Prove}, \mathsf{Verify})$ with:
	$\mathsf{Setup}(1^\lambda) \to \mathsf{crs}$ generates a common reference string,
	$\mathsf{Prove}(\mathsf{crs}, x, w) \to \pi$ produces a proof $\pi$ for
	$(x,w) \in \mathcal{R}$, and
	$\mathsf{Verify}(\mathsf{crs}, x, \pi) \to \{0,1\}$ checks validity.
	It must satisfy perfect completeness, computational knowledge soundness, and
	computational zero-knowledge. 
	In our construction, we instantiate $\mathsf{Com}(m; \omega) := H_4(m \| \omega)$ where $\omega \sample \{0,1\}^\lambda$, and write $\mathsf{com}_{\mathcal{X}} \gets \mathsf{Com}(\mathcal{X}; \omega)$ for the
	commitment to a set $\mathcal{X}$.

	\begin{figure*}[htbp]
		\centering
		\includegraphics[width=\linewidth]{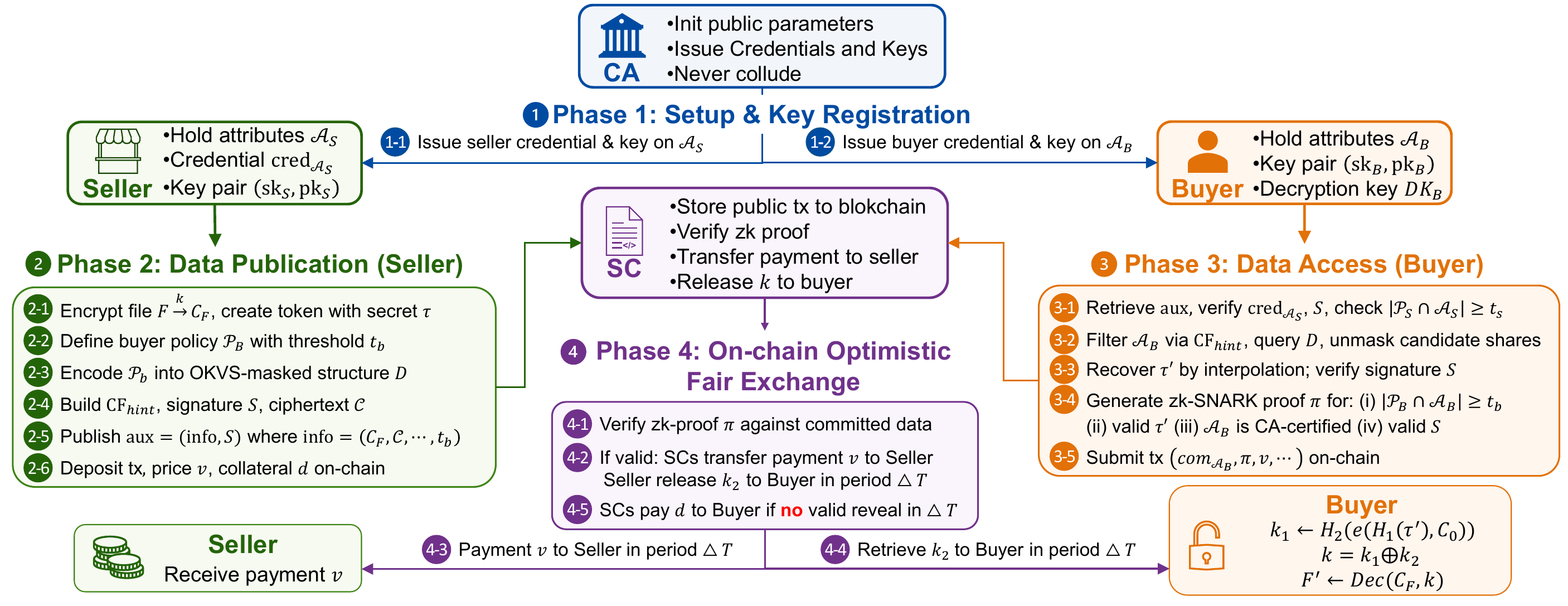}
		\caption{The system-overview of PriME-Deal.}
		\label{fig:system-overview}
	\end{figure*}

	\section{System Overview}
	\label{sec:system}
	
	As illustrated in Fig.~\ref{fig:system-overview}, the PriME-Deal ecosystem consists of four core entities and a blockchain-based public ledger.
	Data transmission between entities can be carried out via any standard secure channel (e.g., TLS or a peer-to-peer protocol); the protocol itself is agnostic to the specific transport mechanism.
	Only succinct cryptographic commitments and proofs are recorded on-chain.
	
	\begin{enumerate}[leftmargin=*]
		\item \textbf{Certificate Authority (CA).} A fully trusted entity that
		runs $\mathsf{Setup}$ to generate global parameters, registers users, and
		issues long‑term attribute‑based credentials.  The CA is assumed honest
		and never colludes.
		
		\item \textbf{Seller ($\mathcal{S}$).} Holds certified attributes
		$\mathcal{A}_S$ and sells decryption rights for a file $F$.  It defines a
		buyer policy $\mathcal{P}_B$ with threshold $t_b$, encodes it into a
		public OKVS, encrypts $F$, and deposits the key hash together with the
		price and an over‑collateral into the smart contract.
		
		\item \textbf{Buyer ($\mathcal{B}$).} Holds certified attributes
		$\mathcal{A}_B$ and specifies a seller policy $\mathcal{P}_S$ with
		threshold $t_s$.  It locally reconstructs the token from the OKVS and
		generates a ZKP attesting that its own attributes satisfy the seller's
		buyer policy (i.e., $|\mathcal{P}_B \cap \mathcal{A}_B| \ge t_b$).
		Before generating the proof, the buyer also locally verifies that the
		seller's attributes meet its own seller policy
		($|\mathcal{P}_S \cap \mathcal{A}_S| \ge t_s$); if not, it aborts.
		This design enforces \textbf{bilateral matching} while only the
		buyer‑side check is submitted on‑chain via the ZKP, keeping the
		seller‑side policy and the buyer's attributes private.
		
		\item \textbf{Smart Contract ($\mathcal{SC}$).} A self‑executing program
		that holds the encrypted key hash in escrow, verifies the buyer's ZKP,
		and manages an optimistic fair‑exchange protocol.  After a valid proof
		is submitted, the contract freezes the payment and opens a challenge
		period during which the seller must reveal the partial decryption key.
		If the buyer detects a decryption failure, it can trigger an on‑chain
		audit procedure that either confirms the key's correctness or penalises
		a cheating seller by transferring the collateral to the buyer.  The
		contract thus acts as a decentralized mediator requiring no trust.
	\end{enumerate}

	%
	
	\subsection{Protocol Overview}
	\label{sec:protocol-overview}
	
	To realize the above system, PriME‑Deal comprises five main algorithms.
	We briefly state their interfaces; full specifications appear in Sections~\ref{sec:setup}--\ref{sec:exchange}.
	
	\begin{enumerate}[leftmargin=*]
		\item $\mathsf{Setup}(1^\lambda) \to (\mathsf{pp}, \mathsf{msk})$: generates bilinear groups, hash functions, a linear OKVS instance, the ZKP common reference string $\mathsf{crs}$, and key pair $(\mathsf{msk},\mathsf{mpk})$.
		\item $\mathsf{AttrKeyGen}(\mathsf{msk}, \mathsf{uid}, \mathcal{A}) \to (\mathsf{IC}_\mathsf{uid}, \mathsf{DK}_\mathsf{uid})$: the CA generates an identity credential to a registered user. For a seller, the credential enables token signature generation; for a buyer, the CA additionally computes a decryption key serves as a certified witness in the ZKP.
		\item $\mathsf{Publish}(\mathsf{pp}, \mathsf{EK}_S, \mathcal{A}_S, \mathcal{P}_B, t_b, F, v) \to \mathsf{aux}$: the seller generates a fresh token $\tau$, secret‑shares it under $\mathcal{P}_B$, masks the shares with a PRF, and encodes them into a public OKVS $D$.
		It also encrypts $F$ under a key derived from $\tau$, publishes the auxiliary data $\mathsf{aux}$, and deposits the encrypted key hash together with the price $v$ and a collateral $d$ into $\mathcal{SC}$.
		\item $\mathsf{Access}(\mathsf{pp}, \mathsf{DK}_B, \mathcal{P}_S, t_s, \mathcal{A}_B, \mathsf{aux}) \to (\pi, \mathsf{tx})$:
		The buyer first locally checks that the seller's attributes satisfy its own policy,
		i.e., $|\mathcal{P}_S \cap \mathcal{A}_S| \ge t_s$, and aborts if not.
		It then probes the OKVS, reconstructs the token $\tau$ using its own attributes
		$\mathcal{A}_B$, and generates a ZKP $\pi$ attesting that
		(i)~$\tau$ was correctly recovered from the OKVS using a subset of~$\mathcal{A}_B$,
		which implies $|\mathcal{P}_B \cap \mathcal{A}_B| \ge t_b$, and
		(ii)~the credential and seller's signature on $\tau$ are valid.
		The proof and payment are bundled into a blockchain transaction $\mathsf{tx}$.
		\item \textbf{On‑Chain Optimistic Fair Exchange.}
		The smart contract $\mathcal{SC}$ verifies $\pi$; if valid, it freezes the
		buyer's payment $v$ and opens a challenge window.  The seller then reveals
		the partial key $k_2$ on‑chain.  If the buyer's subsequent decryption
		succeeds, the payment is released to the seller.  If decryption fails, the
		buyer may invoke an on‑chain audit (Section~\ref{sec:audit-evidence}) that
		checks the consistency of $k_2$ with the encrypted file via Merkle proofs.
		A successful audit proves seller misbehaviour, causing the collateral $d$
		to be transferred to the buyer and the payment $v$ to be refunded.  This
		mechanism guarantees fairness without assuming rational behaviour from the
		seller.
	\end{enumerate}
	
	In our PriME‑Deal, the OKVS vector $D$ is public, yet due to PRF masking and obliviousness, an adversary learns nothing about $\mathcal{P}_B$ beyond what can be inferred from its length. The token $\tau$ acts as a bridge: it is hidden by the OKVS, authenticated by a signature, and finally used in a single pairing to recover the file key. Only a constant number of pairings (two for signature verification and key decryption) are performed by the buyer, independent of the policy sizes.

	\subsection{Threat Model}
	\label{sec:threat-model}
	
	We consider probabilistic polynomial-time (PPT) adversaries.
	The certificate authority (CA) is fully trusted and never corrupted.
	The adversary can statically corrupt any subset of sellers and buyers, learning
	their long‑term secrets ($\mathsf{sk}_\mathsf{uid}$, $\mathsf{DK}$, $\mathsf{cred}$).
	All public communication, such as blockchain transactions, auxiliary data $\mathsf{aux}$,
	and on‑chain events, is visible to the adversary.
	Corrupted parties may deviate arbitrarily from the protocol (malicious),
	whereas uncorrupted parties follow the protocol but may try to infer private
	information from their views (honest‑but‑curious).
	External observers that only monitor public data are also captured by the model.
	\textbf{Note:} network attacks such as DDoS, physical side‑channel attacks
	are out of scope.
	
	\subsection{Security Model and Formal Definitions}
	\label{sec:security-model}
	
	We provide two complementary formalizations: an ideal functionality in the
	Universal Composability (UC) framework that captures the overall protocol
	guarantees, and game‑based definitions that directly align with our concrete
	assumptions and allow precise reductions.
	
	\subsubsection{UC Ideal Functionality $\mathcal{F}_{\mathsf{PPFE}}^{\mathsf{weak}}$}
	\label{sec:uc-func}
	
	Public blockchains cannot hide data from observers, and a smart contract
	cannot directly force an off‑chain party to reveal a secret.
	Consequently, a fully atomic ideal functionality that instantly delivers the
	decryption key upon a valid proof is impossible without a trusted third party.
	Following the approach of FairSwap~\cite{SLS18} and OptiSwap~\cite{LSB20},
	we capture this inherent limitation by a \emph{weak} functionality in which
	the seller must actively perform a key‑reveal step, and fairness is enforced
	through an on‑chain challenge‑response mechanism with a penalty deposit.
	The gap between our weak functionality and a fully atomic one is discussed
	at the end of this subsection.
	
	\textbf{Functionality $\mathcal{F}_{\mathsf{PPFE}}^{\mathsf{weak}}$:~}
	
	\noindent \textbf{State.}
	A table $\mathsf{Listings}[\mathsf{sid}]$ stores
	$(\mathsf{seller}, \mathcal{A}_S, \mathcal{P}_B, t_b, v, d, K, h_K, \tau,
	\mathsf{status}, \mathsf{timer}, \mathsf{pub})$,
	where $\mathsf{pub}$ collects the public values needed to verify the
	buyer's ZKP, namely
	$(h_D, \mathsf{s}_\nu, S_0, C_0, \mathsf{mpk}, \mathsf{pk}_S, \mathsf{uid}_S)$.
	Initially $\mathsf{status}=\bot$.
	
	\begin{itemize}[leftmargin=10px]
		\item \textbf{Registration.}
		On input $(\mathsf{Register}, \mathsf{uid}, \mathcal{A})$ from the CA,
		store $\mathcal{A}$ under identity $\mathsf{uid}$.
		
		\item \textbf{Publishing.}
		On input $(\mathsf{Publish}, \mathsf{sid},
		\mathcal{A}_S, \mathcal{P}_B, t_b, F, v, d, \mathsf{pub})$ from a
		registered seller:
		generate a random file key $K \sample \{0,1\}^\lambda$ and a fresh token
		$\tau \sample \mathbb{F}_p$, set $h_K = H_4(K)$,
		store $(\mathsf{seller}, \mathcal{A}_S, \mathcal{P}_B, t_b, v, d, K, h_K,
		\tau, \text{active}, \bot, \mathsf{pub})$, and send
		$(\mathsf{sid}, \mathsf{seller}, |\mathcal{A}_S|, |\mathcal{P}_B|, t_b, v, d)$
		to $\mathcal{S}$.
		
		\item \textbf{Probing.}
		On input $(\mathsf{Probe}, \mathsf{sid}, x)$ from $\mathcal{S}$:
		if no listing $\mathsf{sid}$ exists, return $\bot$.
		If $x \in \mathcal{P}_B$, return $\top$; otherwise return a
		random element of $\mathbb{F}_p$.  This models the OKVS+PRF
		value‑hiding guarantee.
		
		\item \textbf{Buy (initiating challenge).}
		On input $(\mathsf{Buy}, \mathsf{sid}, \mathcal{P}_S, t_s,
		\mathcal{A}_B, \pi, v)$ from a registered buyer:
		\begin{itemize}
			\item If the listing is not active, reject.
			\item Let $\mathsf{pub}$ be the stored public data.
			Verify that $\pi$ is a valid proof for $\mathcal{R}_{\mathsf{buy}}$
			with the public statement $(\mathsf{sid}$, $\mathsf{com}_{\mathcal{A}_B}$,
			$\mathsf{pub}.h_D$, $\mathsf{pub}.\mathsf{s}_\nu$, $t_b$, $\mathsf{pub}.S_0$,
			$\mathsf{pub}.C_0$, $\mathsf{mpk}$, $\mathsf{pub}.\mathsf{pk}_S$,
			$\mathsf{pub}.\mathsf{uid}_S)$.
			\item If $|\mathcal{P}_B \cap \mathcal{A}_B| < t_b$ or the proof is
			invalid, reject.
			\item \textbf{Seller‑side policy.}
			$\mathcal{F}$ does \emph{not} check $|\mathcal{P}_S \cap \mathcal{A}_S|
			\ge t_s$.  Instead, the buyer is expected to perform this check
			locally before calling $\mathsf{Buy}$; a malicious buyer that skips
			the check may lose its payment but cannot break fairness.
			\item Otherwise, freeze $v$ from the buyer, set
			$\mathsf{status} \gets \text{challenged}$,
			$\mathsf{timer} \gets \Delta T$, and notify $\mathcal{S}$ of the
			challenge with leakage $(\mathsf{sid}, \mathsf{com}_{\mathcal{A}_B},
			\text{timestamp})$.
		\end{itemize}
		
		\item \textbf{Key Reveal.}
		On input $(\mathsf{KeyReveal}, \mathsf{sid}, K')$ from the seller:
		\begin{itemize}
			\item If $\mathsf{status} \neq \text{challenged}$ or $\mathsf{timer}$
			has expired, reject.
			\item If $H_4(K') \neq h_K$, reject.
			\item Transfer $v$ to the seller, return $d$ to the seller,
			send $K'$ to the buyer, set $\mathsf{status} \gets \text{completed}$.
		\end{itemize}
		
		\item \textbf{Timeout.}
		When $\mathsf{status} = \text{challenged}$ and $\mathsf{timer}$
		expires (can be triggered by any party):
		return $v$ to the buyer, transfer $d$ to the buyer, set
		$\mathsf{status} \gets \text{disputed}$.
	\end{itemize}
	
	\noindent \textbf{Comparison with a fully atomic functionality.}
	A fully atomic $\mathcal{F}_{\mathsf{PPFE}}$ would deliver the decryption key
	$K$ immediately upon a valid $\mathsf{Buy}$ request, without requiring the
	seller to perform an explicit reveal step.  Such a functionality cannot be
	realised on public blockchains without a trusted party because the contract
	cannot extract $K$ from the seller.  Our weak functionality faithfully
	reflects this limitation and admits a concrete protocol, while still
	providing strong fairness guarantees through the collateral and audit
	mechanism.

	\subsubsection{Game‑Based Definitions}
	\label{sec:games}
	
	For concrete reductions we define two games.
	
	\textbf{(1) Policy Privacy Game} $\mathsf{Priv}_{\mathcal{A}}^{\mathsf{policy}}(\lambda)$:
	\begin{itemize}[leftmargin=10px]
		\item \textbf{Init.} $\mathcal{A}$ outputs two seller attribute sets
		$\mathcal{A}_{S,0}, \mathcal{A}_{S,1}$ of equal size, two buyer policies
		$\mathcal{P}_{B,0}, \mathcal{P}_{B,1}$ of equal size, and thresholds
		$t_s, t_b$.
		\item \textbf{Setup.} The challenger runs $\mathsf{Setup}$ to obtain
		$\mathsf{pp}$ and gives it to $\mathcal{A}$.
		\item \textbf{Key Queries.} $\mathcal{A}$ may adaptively request
		decryption keys for buyer identities and signing keys for seller
		identities.  The adversary is restricted so that \emph{no} queried
		buyer identity $\mathsf{uid}_B$ (with attribute set $\mathcal{A}_B$)
		simultaneously satisfies $|\mathcal{P}_{B,0} \cap \mathcal{A}_B| \ge t_b$
		and $|\mathcal{P}_{B,1} \cap \mathcal{A}_B| \ge t_b$.  (If such a
		buyer existed, it could trivially decrypt the file for both challenge
		branches and distinguish $b$.)
		\item \textbf{Challenge.} $\mathcal{A}$ chooses two equal‑length files
		$F_0, F_1$.  The challenger picks a random bit $b$, runs $\mathsf{Publish}$
		with $(\mathcal{A}_{S,b}, \mathcal{P}_{B,b}, t_b, F_b)$, and returns the
		resulting $\mathsf{aux}$ and on‑chain deposit.
		\item \textbf{Guess.} $\mathcal{A}$ outputs $b'$ and wins if $b' = b$.
	\end{itemize}
	The scheme is \textbf{policy‑private} if
	$\Pr[\mathcal{A}\text{ wins}] \le 1/2 + \mathsf{negl}(\lambda)$.

	\textbf{(2) Fair Exchange Game} $\mathsf{Fair}_{\mathcal{A}}^{\mathsf{exch}}(\lambda)$:~
	This game provides the atomicity guarantee that the buyer obtains the file key iff the seller receives the payment, even in the presence of a malicious party.
	\begin{itemize}[leftmargin=10px]
		\item \textbf{Setup.} As above.
		\item \textbf{Corruption.} $\mathcal{A}$ outputs a corruption bit $c$
		(with the convention that $c=0$ means $\mathsf{seller}$ and $c=1$ means
		$\mathsf{buyer}$) and receives the corresponding long‑term keys.
		The other party is honest.
		\item \textbf{Execution.} The challenger plays the role of the honest party
		and faithfully simulates the blockchain smart contract (including the
		collateral and challenge period).  $\mathcal{A}$ may arbitrarily deviate
		from the prescribed algorithms. In particular:
		\begin{itemize}
			\item When the seller is corrupted, $\mathcal{A}$ may choose
			inconsistent $(k_2, h_{k_2})$ during $\mathsf{Publish}$ and later
			reveal a different $k_2'$ that satisfies the on‑chain hash but results
			in decryption failure.
			\item When the buyer is corrupted, $\mathcal{A}$ may attempt to
			forge a ZK proof without valid attributes or collude with others to
			combine keys.
		\end{itemize}
		The execution runs until the protocol terminates or $\mathcal{A}$ aborts;
		afterwards the winning condition is evaluated.
		\item \textbf{Winning Condition.} Define two events:
		\begin{align*}
			E_1 &: \text{buyer obtains decryption key } k \text{ for } F \\
			& (i.e., \text{get a key to make}\mathsf{AES.Dec}(k, C_F) \neq \bot )\\
			E_2 &: \text{seller receives payment } v
		\end{align*}
		\item \textbf{Challenge.} For an honest buyer the challenger directly verifies $E_1$ by checking
		whether it received and successfully used such a key; for a corrupted
		buyer the challenger inspects the protocol transcript and decides whether
		a key $k$ satisfying the condition was delivered to $\mathcal{A}$
		(e.g., via an on‑chain reveal). $E_2$ is evaluated by checking the
		payment state recorded by the challenger.
		\item \textbf{Win.} $\mathcal{A}$ wins if $E_1$ and $E_2$ differ in truth value, i.e., exactly
		one of them holds.
	\end{itemize}
	The scheme is \textbf{fair-exchange} if for every PPT $\mathcal{A}$,
	$\Pr[\mathcal{A}\text{ wins}] \le \mathsf{negl}(\lambda)$.

	\section{Construction of PriME‑Deal}
	\label{sec:protocol}
	We now present the formal specification of PriME‑Deal.
	The protocol comprises four stages: (i) system initialization and key
	distribution, (ii) data publication by the seller, (iii) bilateral
	verification, local token reconstruction and zero‑knowledge proof
	generation by the buyer, and (iv) on‑chain auditable fair exchange followed by
	off‑chain file decryption.
	The complete flow is depicted in Appendix.\ref{fig:prime-deal-full}.
	
	\subsection{System Setup}
	\label{sec:setup}
	$\mathsf{Setup}(1^\lambda, n_{\max})$ generates a bilinear group $(\mathbb{G}, \mathbb{G}_1, \mathbb{G}_T, p, g, g_1, e)$
	and selects random oracles $H_1:\{0,1\}^*\!\to\!\mathbb{G}_1$, $H_2:\mathbb{G}_T\!\to\!\{0,1\}^\lambda$,
	$H_3:\{0,1\}^*\!\to\!\mathbb{F}_p$, $H_4:\{0,1\}^*\!\to\!\{0,1\}^\lambda$.
	A linear OKVS over $\mathbb{F}_p$ with row function $\mathsf{row}:\{0,1\}^*\!\to\!\mathbb{F}_p^m$
	($m\!\approx\!1.3 n_{\max}$) and a PRF $\mathsf{PRF}:\{0,1\}^\lambda\!\times\!\{0,1\}^*\!\to\!\mathbb{F}_p$ are instantiated.
	The master secret key is $\mathsf{msk}=s\sample\mathbb{F}_p$ with $\mathsf{mpk}=g^s$.
	A Groth16 CRS is generated for $\mathcal{R}_{\mathsf{buy}}$ (Section~\ref{sec:buyer-circuit}) over BN254.
	Output $\mathsf{pp}=(\mathbb{G},\mathbb{G}_1,\mathbb{G}_T,p,g,g_1,\mathsf{mpk},e,H_1,H_2,H_3,H_4,\mathsf{row},\mathsf{PRF},\mathsf{crs})$.
	
	\subsection{Identity Credential and Attribute Key Generation}
	\label{sec:attr-key}
	For a user with identity $\mathsf{uid}$ and attribute set $\mathcal{A}$,
	the CA computes $\mathsf{sk}=s\cdot H_3(\mathsf{uid})$, $\mathsf{pk}=g^{\mathsf{sk}}$,
	and $\mathsf{cred}_{\mathcal{A}}=H_1(\mathsf{uid}\|\mathcal{A})^{s}$.
	\begin{itemize}
		\item \textbf{Seller} publishes $\mathsf{IC}_S=(\mathsf{uid}_S,\mathsf{pk}_S,\mathsf{cred}_{\mathcal{A}_S})$
		with $\mathcal{A}_S$ in the clear.
		\item \textbf{Buyer} additionally obtains $\mathsf{DK}_B=\{(a,D_a)\}_{a\in\mathcal{A}_B}$
		where $D_a=H_1(a)^{s}$, used only for unmasking during token reconstruction.
		The identity credential is $\mathsf{IC}_B=(\mathsf{uid}_B,\mathsf{pk}_B,\mathsf{cred}_{\mathcal{A}_B})$.
	\end{itemize}
	
	\subsection{Publish (Seller)}
	\label{sec:publish}
	
	Algorithm~\ref{alg:publish} details the seller's off‑line procedure. In seller-side, the buyer policy $\mathcal{P}_B$ is never transmitted; its Shamir shares
	are doubly masked using a PRF and the hash of an attribute‑based
	encapsulation key before being encoded into the public OKVS $D$.
	Each ciphertext $C_i \in \mathcal{C}$ is a pair $(C_i^*, \mathsf{tag}_i)$ where
	$C_i^* = g^{r_i}$ is used for pairing‑based matching and $\mathsf{tag}_i = H_4(K_i)$
	enables attribute identification.
	Thus only a buyer holding the corresponding $D_a$ can unmask a share,
	and $D$ otherwise reveals nothing beyond its dimension.
	The file key $k$ is split into $k_1$, derivable from $\tau$ and $C_0$,
	and a random $k_2$ whose hash $h_{k_2}$ is deposited on‑chain.
	Even if an adversary obtains $\tau$, the file remains confidential until
	the seller releases $k_2$ during the fair exchange.
	On‑chain, only hashes and a policy commitment are stored, keeping
	attribute information private.
	Atomicity is enforced through a collateralized reveal mechanism
	(Section~\ref{sec:exchange}): the seller must deposit
	$d = \lceil\log|\mathcal{P}_B|\rceil\cdot v$~\cite{DBLP:journals/tdsc/ZhangLFXHWB26} along with $h_{k_2}$ ;
	revealing a $k_2'$ that does not match causes forfeiture of $d$.
	Since $d$ exceeds the payment $v$, misbehavior is economically irrational,
	so a rational seller who receives $v$ will honestly release the correct
	$k_2$, thereby guaranteeing the ``payment $\Rightarrow$ key'' direction
	of atomicity. 
	
	\begin{algorithm}[htbp]
		\caption{$\mathsf{Publish}(\mathsf{pp}, \mathsf{sk}_S, \mathsf{IC}_S, \mathcal{A}_S, \mathcal{P}_B, t_b, F, m, v)$}
		\label{alg:publish}
		\begin{algorithmic}[1]
			\Require $\mathsf{pp}$, $\mathsf{sk}_S$, $\mathsf{IC}_S$, $\mathcal{A}_S$, $\mathcal{P}_B=\{p_1,\dots,p_{n_b}\}$, $t_b$, $F$, $m$, $v$
			\Ensure $\mathsf{aux}$ and an on‑chain deposit
			\State $\mathsf{sid} \gets \mathsf{uid}_S \| \text{timestamp}$ \Comment{from $\mathsf{IC}_S$}
			\State $\tau, \rho \sample \mathbb{F}_p$; $\mathsf{s}_\nu, \mathsf{s}_r \sample \{0,1\}^\lambda$ \Comment{PRF seeds $\mathsf{s}_{\{\nu,r\}}$} 
			\State $C_0 \gets g^\rho$, $k_1 \gets H_2(e(H_1(\tau), C_0))$ \Comment{$\tau$-derived}
			\State $k_2 \sample \{0,1\}^\lambda$, $k \gets k_1 \oplus k_2$, $h_{k_2} \gets H_4(k_2)$ \Comment{key}
			\State Split plaintext $F$ into $m$ fixed‑size blocks $F_1,\dots,F_m$
			\For{$i=1$ to $m$}
			\State $\kappa_i \gets H_4(k \| i)$ \Comment{block‑specific key $\kappa_i$}
			\State $c_i \gets \mathsf{AES.Enc}(\kappa_i, F_i)$, $h_i \gets H_4(c_i)$
			\EndFor
			\State $C_F \gets c_1 \| \cdots \| c_m$ \Comment{generate encrypt file}
			\State Build a Merkle tree over $h_i$ to get tree root $\mathsf{root}$, where $h_i \gets H_4(c_i)$
			\State $S_0 \gets H_1(\mathsf{sid} \| \tau)^{\mathsf{sk}_S}$, $\sigma_{\mathsf{root}} \gets H_1(\mathsf{root})^{\mathsf{sk}_S}$ \Comment{sign $\tau,\mathsf{root}$}
			\State Choose random polynomial $f(x)$ of degree $t_b{-}1$, $f(0)=\tau$
			\For{$i=1$ to $n_b$}
			\State $x_i \gets H_3(p_i)$, $s_i \gets f(x_i)$ \Comment{compute $p_i$'s share $s_i$}
			\State $r_i \gets \mathsf{PRF}(\mathsf{s}_r, p_i)$, $K_i \gets H_2(e(H_1(p_i), \mathsf{mpk})^{r_i})$
			\State $\mathsf{tag}_i \gets H_4(K_i)$, $C_i \gets (g^{r_i},\, \mathsf{tag}_i)$
			\State $\nu_i \gets \mathsf{PRF}(\mathsf{s}_\nu, p_i)$;
			$y_i \gets s_i + \nu_i + H_3(K_i) \bmod p$ \Comment{mask $s_i$ with PRF and key to output encode element $y_i$}
			\EndFor
			\State $\mathcal{C} \gets \{C_1,\dots,C_{n_b}\}$
			\State $D \gets \mathsf{Encode}(\{(p_i, y_i)\}_{i=1}^{n_b}),h_D\gets H_4(D)$ \Comment{OKVS encode}
			\State $\mathsf{CF}_{\mathsf{hint}} \gets \mathsf{CF}.\mathsf{insert}(H_3(p_1),\dots,H_3(p_{n_b}))$
			\State $\mathsf{com}_{\mathcal{P}_B} \gets H_4(\mathcal{P}_B \| H_4(\tau))$ \Comment{ZKP for $\mathcal{P}_B$}
			\State $\mathsf{info} \gets (\mathsf{sid}, \mathsf{IC}_S, C_F, \mathsf{s}_\nu, C_0, \mathcal{C}, S_0, D, \mathsf{CF}_{\mathsf{hint}}, \mathsf{com}_{\mathcal{P}_B}, t_b)$
			\State $S_1 \gets H_1(\mathsf{info})^{\mathsf{sk}_S}$ \Comment{data authenticity signature}
			\State \textbf{On‑chain:} deposit $(\mathsf{sid}$, $h_D$, $\mathsf{com}_{\mathcal{P}_B}$, $t_b$, $S_0$, $h_{k_2}$, $\mathsf{root}$, $\sigma_{\mathsf{root}}$, $v$, $d)$ \Comment{$d=\lceil\log{n_b}\rceil\cdot v$ is a collateral}
			\State \Return $\mathsf{aux} \gets (\mathsf{info}, S_1)$
		\end{algorithmic}
	\end{algorithm}

	\subsection{Bilateral Verification, Token Reconstruction and Proof Generation (Buyer)}
	\label{sec:buyer}
	\begin{algorithm}[htbp]
		\caption{$\mathsf{Buy}(\mathsf{pp}, \mathsf{sk}_B, \mathsf{DK}_B, \mathcal{A}_B, \mathsf{cred}_{\mathcal{A}_B}, \mathcal{P}_S, t_s, \mathsf{aux})$}
		\label{alg:buy}
		\begin{algorithmic}[1]
			\Require $\mathsf{pp}$, $\mathsf{sk}_B$, $\mathsf{DK}_B$, $\mathcal{A}_B$, $\mathsf{cred}_{\mathcal{A}_B}$, $\mathcal{P}_S$, $t_s$, $\mathsf{aux}$
			\Ensure ZKP $\pi$ and transaction $\mathsf{tx}$, or $\bot$
			\State $(\mathsf{info}, S_1) \gets \mathsf{aux}$
			\State {\small $(\mathsf{sid}, \mathsf{IC}_S, C_F, \mathsf{s}_\nu, C_0, \mathcal{C}, S_0, D, \mathsf{CF}_{\mathsf{hint}}, \mathsf{com}_{\mathcal{P}_B}, t_b) \gets \mathsf{info}$ }
			\State \textbf{// Phase 1: integrity and seller‑side checks}
			\State $(\mathsf{uid}_S, \mathsf{pk}_S, \mathsf{cred}_{\mathcal{A}_S}) \gets \mathsf{IC}_S$ 
			\State Verify $e(S_1, g) = e(H_1(\mathsf{info}), \mathsf{pk}_S)$; abort if invalid.
			\State Verify $e(\mathsf{cred}_{\mathcal{A}_S}, g) = e(H_1(\mathsf{uid}_S \| \mathcal{A}_S), \mathsf{mpk})$; abort if invalid.
			\State Check $|\mathcal{P}_S \cap \mathcal{A}_S| \ge t_s$; abort if not satisfied.
			\State \textbf{// Phase 2: token reconstruction}
			\State $\mathcal{A}' \gets \{a \in \mathcal{A}_B \mid \mathsf{CF}_{\mathsf{hint}}.\mathsf{lookup}(H_3(a)) = \text{true}\}$
			\Comment{Cuckoo filter prunes non‑candidate attributes}
			\State $\mathcal{U} \gets \emptyset$
			\For{each $a \in \mathcal{A}'$}
			\State $z_a \gets \langle \mathsf{row}(a), D \rangle$, $\nu_a \gets \mathsf{PRF}(\mathsf{s}_\nu, a)$
			\For{each $C_i=(C_i^*,\mathsf{tag}_i) \in \mathcal{C}$}
			\State $K_i' \gets H_2(e(C_i^*, D_a))=H_2(e(g^{r_j},H_1(a)^s))=H_2(e(g^s,H_1(a))^{r_j})$
			\Comment{equal iff $C_i$ was created for $a$}
			\If{$H_4(K_i') = \mathsf{tag}_i$}
			\State $u \gets z_a - \nu_a - H_3(K_i') \bmod p$
			\State $\mathcal{U} \gets \mathcal{U} \cup \{(a, C_i^*, u)\}$
			\State \textbf{break}
			\EndIf
			\EndFor
			\EndFor
			\State $(\tau', \mathcal{U}_0) \gets \mathsf{Interpolate}(\mathcal{U}, t_b)$
			\Comment{enumerate $t_b$-subsets}
			\If{$\tau' = \bot$} \Return $\bot$ \EndIf
			\State \textbf{// Phase 3: token authentication}
			\State Verify $e(S_0, g) = e(H_1(\mathsf{sid} \| \tau'), \mathsf{pk}_S)$; abort if invalid.
			\State \textbf{// Phase 4: zero‑knowledge proof}
			\State $\mathsf{com}_{\mathcal{A}_B} \gets \mathsf{Com}(\mathcal{A}_B; \omega_B)$ with $\omega_B \sample \{0,1\}^\lambda$
			\State Assemble witness $w$ and public statement $x$ as defined in
			Section~\ref{sec:buyer-circuit}.
			\State $\pi \gets \mathsf{Prove}(\mathsf{crs}, x, w)$
			\State \Return $\mathsf{tx} = (\text{``buy''}, \mathsf{sid}, \mathsf{com}_{\mathcal{A}_B}, \pi, v)$
		\end{algorithmic}
	\end{algorithm}
	
	Algorithm~\ref{alg:buy} specifies the buyer's entirely off‑chain procedure. In buyer-side, the buyer first verifies $S_1$, $\mathsf{cred}_{\mathcal{A}_S}$ and
	$|\mathcal{P}_S\cap\mathcal{A}_S|\ge t_s$ locally, filtering invalid
	listings without leaking its attributes.
	To reconstruct the token, each ciphertext $C_i = (C_i^*, \mathsf{tag}_i)$ carries a match tag
	$\mathsf{tag}_i=H_4(K_i)$, allowing the buyer to test a candidate
	attribute $a$ by checking $H_4(H_2(e(C_i^*,D_a)))=\mathsf{tag}_i$.
	A Cuckoo filter pre‑screens attributes, so the pairing cost is
	$O(t_b\!\cdot\!|\mathcal{P}_B|)$; only genuine shares are collected.
	The buyer then enumerates $t_b$-subsets over the collected shares and
	performs Lagrange interpolation to recover $\tau'$, revealing only
	which of the buyer's attributes satisfy $\mathcal{P}_B$.
	The subsequent zero‑knowledge proof $\pi$ attests, without disclosing
	$\mathcal{A}_B$ or $\tau'$, that the buyer holds $t_b$ certified
	attributes and has correctly derived $\tau'$; the circuit enforces
	commitment consistency, credential validity, correct share
	computation, interpolation, and the seller’s signature on $\tau'$
	(Section~\ref{sec:buyer-circuit}).
	Because $\pi$ is knowledge‑sound, it prevents forgery and attribute
	collusion, guaranteeing that only an authorised buyer who honestly
	executed the protocol (and submitted the payment $v$) can produce a
	valid proof.  Moreover, $C_F$ and $h_{k_2}$ are publicly stored
	on‑chain; a buyer who falsely claims decryption failure can be
	refuted by disclosing $k_2$, providing off‑chain verifiability and
	discouraging dishonest disputes.  These mechanisms collectively
	enforce the ``key $\Rightarrow$ payment'' direction of atomicity.

	\subsection{On‑Chain Auditable Fair Exchange and Decryption}
	\label{sec:exchange}
	
	Since public blockchains cannot store secret data, we adopt an \emph{optimistic
		fair exchange} paradigm inspired by FairSwap~\cite{SLS18} and OptiSwap~\cite{LSB20},
	extended with a cryptographic audit mechanism that resolves disputes without
	trusted third parties.  The seller deposits a hash of $k_2$, a Merkle root
	$\mathsf{root}$ of the encrypted file blocks, a signature
	$\sigma_{\mathsf{root}}$ on $\mathsf{root}$, and a collateral $d > v$.  After
	the seller reveals $k_2$, a short dispute window allows the buyer to challenge
	the key's validity on‑chain; if no challenge occurs, the exchange is settled
	optimistically.  Algorithm~\ref{alg:contract} summarizes the contract logic.
	
	\begin{algorithm}[htbp]
		\caption{Smart Contract: Optimistic Fair Exchange with Public Audit}
		\label{alg:contract}
		\begin{algorithmic}[1]
			\Require $\mathsf{tx}$ from buyer $\mathsf{uid}_B$ with value $v$ and data $(\mathsf{sid}, \mathsf{com}_{\mathcal{A}_B}, \pi)$
			\Ensure Fair exchange with cryptographic auditability
			\State $L \gets \mathsf{Listings}[\mathsf{sid}]$; assert $L.\mathsf{status} = \text{active}$
			\State Assert $\mathsf{msg.value} = L.v$
			\State \small {$x \gets (\mathsf{mpk}, \mathsf{pk}_S, \mathsf{sid}, \mathsf{com}_{\mathcal{A}_B}, L.h_D, L.\mathsf{s}_\nu, L.t_b, L.S_0, L.C_0)$}
			\State Require $\mathsf{Verify}_{\mathsf{SNARK}}(x, \pi) = \text{true}$ \Comment{validate buyer compliance (see Section \ref{sec:buyer-circuit})}
			\State Freeze $\mathsf{msg.value}$ and start challenge period $\Delta T$ 
			\State \textbf{During} $\Delta T$:
			\State \quad Seller may call $\texttt{reveal}(\mathsf{sid}, k_2)$ \Comment{reveal key}
			\If{$H_4(k_2) = L.h_{k_2}$}
			\State $L.k_2 \gets k_2$, $L.\mathsf{status} \gets \text{revealed}$
			\State \textbf{Emit} $\mathsf{KeyReleased}(\mathsf{sid}, k_2)$
			\State Start a short dispute window $\Delta T_{\mathsf{disp}}$
			\EndIf
			\State Decrypt $C_F$ with $k_2$ \Comment{see Section \ref{sec:buyer-filedecry}}
			\State \textbf{During} $\Delta T_{\mathsf{disp}}$: \Comment{key correctness audit upon failure}
			\State \quad Buyer may call $\texttt{challenge}(\mathsf{sid}, \mathcal{I}, \{c_i, b_i, \pi_i\}_{i\in\mathcal{I}})$
			\Comment{evidence prepared as in Section \ref{sec:audit-evidence}}
			\If{$\texttt{challenge}$ is called}
			\State Verify the submitted audit evidence
			\If{audit passes}
			\State Transfer $L.v$ to seller, return $L.d$ to seller
			\Else
			\State Transfer $L.d$ to buyer, refund $L.v$ to buyer
			\EndIf
			\State Terminate listing
			\Else
			\State After $\Delta T_{\mathsf{disp}}$ without challenge, transfer $L.v$ to seller, return $L.d$ to seller
			\EndIf
			\State \textbf{After} $\Delta T$ without a valid $\texttt{reveal}$:
			\State \quad Buyer may call $\texttt{reclaim}()$
			\State \quad \quad Transfer $L.v$ back to buyer, transfer $L.d$ to buyer
			\State \quad \quad Mark listing \texttt{disputed}
		\end{algorithmic}
	\end{algorithm}

	\subsubsection{ZKP for Buyer Compliance (Buyer)}
	\label{sec:buyer-circuit}
	
	The zero‑knowledge proof $\pi$ generated in $\mathsf{Buy}$ (Section \ref{sec:buyer}) is
	a SNARK for the NP relation $\mathcal{R}_{\mathsf{buy}}$ defined below.
	Algorithm~\ref{alg:buyer-circuit} specifies the constraint details.
	The Prover supplies a valid witness, and the Verifier
	(smart contract) only needs the public statement $x$ and the proof $\pi$.
	
	\noindent
	\textbf{Public statement (supplied by the Verifier):}
	\[
	x = (\mathsf{sid},\; \mathsf{com}_{\mathcal{A}_B},\; h_D,\; \mathsf{s}_\nu,\;
	t_b,\; S_0,\; C_0,\; \mathsf{mpk},\; \mathsf{pk}_S,\; \mathsf{uid}_S)
	\]
	\noindent
	\textbf{Witness (supplied by the Prover, kept private):}
	\[
	w = (\mathcal{A}_B,\; \omega_B,\; \mathsf{uid}_B,\; \tau',\; \mathcal{U}_0,\;
	D,\; \mathsf{cred}_{\mathcal{A}_B},\; \mathsf{sk}_{\mathsf{uid}_B})
	\]
	Note that constraint~3 binds $\mathsf{pk}_S$ to the CA‑certified seller identity
	$\mathsf{uid}_S$ because $\mathsf{pk}_S = g^{s \cdot H_3(\mathsf{uid}_S)}
	= \mathsf{mpk}^{H_3(\mathsf{uid}_S)}$.  Hence a malicious buyer cannot
	inject a self‑generated public key.
	
	\begin{algorithm}[htbp]
		\caption{zk-SNARK Circuit constraints for $\mathcal{R}_{\mathsf{buy}}$}
		\label{alg:buyer-circuit}
		\begin{algorithmic}[1]
			\State require $\mathsf{com}_{\mathcal{A}_B} = H_4(\mathcal{A}_B \| \omega_B)$ \Comment{Commitment Consistency}
			\State require $e(\mathsf{cred}_{\mathcal{A}_B}, g) = e(H_1(\mathsf{uid}_B \| \mathcal{A}_B), \mathsf{mpk})$ \Comment{Credential Validity}
			\State require $\mathsf{pk}_S = \mathsf{mpk}^{H_3(\mathsf{uid}_S)}$
			\Comment{Seller public key binding}
			\State require $H_4(D) = h_D$ \Comment{OKVS Consistency}
			\State assert $|\mathcal{U}_0| = t_b$ \Comment{Share and token reconstruction}
			\For{each $(a, C_i, u) \in \mathcal{U}_0$}
			\State require $a \in \mathcal{A}_B$
			\State $z \gets \langle \mathsf{row}(a), D \rangle$, $\nu \gets \mathsf{PRF}(\mathsf{s}_\nu, a)$ \Comment{OKVS decode}
			\State $D_a \gets H_1(a)^{\mathsf{sk}_{\mathsf{uid}_B}}$ \Comment{attribute key}
			\State $K' \gets H_2(e(C_i, D_a))$
			\State require $u = z - \nu - H_3(K') \bmod p$
			\EndFor
			\State Let $x_a = H_3(a)$ for each $(a, \cdot, \cdot) \in \mathcal{U}_0$.
			\State require $\tau' = \sum_{(a,\cdot,u)\in\mathcal{U}_0} u \cdot \ell_{a}(0) \bmod p$,
			where $\ell_{a}$ is Lagrange basis polynomial for point $x_a$ in set $\{x_a\}$.
			\State require $e(S_0, g) = e(H_1(\mathsf{sid} \| \tau'), \mathsf{pk}_S)$ \Comment{Token Validity}
		\end{algorithmic}
	\end{algorithm}
	
	\noindent
	\textbf{Zero‑knowledge property.}
	The SNARK guarantees that a valid proof $\pi$ reveals nothing about the
	witness beyond the truth of the statement.  In particular, the buyer's
	attribute set $\mathcal{A}_B$, the reconstructed token $\tau'$, and the
	intermediate shares $\mathcal{U}_0$ remain perfectly hidden from the
	smart contract and any on‑chain observer.
	
	\subsubsection{File Decryption (Buyer)}
	\label{sec:buyer-filedecry}
	
	Upon observing $\mathsf{KeyReleased}$, the buyer retrieves $k_2$, computes
	$k_1' = H_2(e(H_1(\tau'), C_0))$, and recovers the file key $k = k_1' \oplus k_2$.
	It then attempts $\mathsf{AES.Dec}(k, C_F)$.  Authenticated decryption
	guarantees integrity: if decryption succeeds, the trade completes; otherwise
	the buyer may trigger the audit process described in
	Section~\ref{sec:audit-evidence}.
	
	\subsubsection{On‑Chain Audit Evidence and Verification ($\mathcal{SC}$)}
	\label{sec:audit-evidence}
	
	If decryption of $C_F$ fails after $k_2$ is revealed, the buyer may invoke
	$\texttt{challenge}$.  The contract selects a random challenge set
	$\mathcal{I} \subseteq \{1,\dots,m\}$ of size $\ell$ using on‑chain
	randomness (e.g., hashing the block header with the caller’s timestamp).
	
	\noindent\textbf{Phase-1:} To prepare the
	required evidence, the buyer executes locally the following steps:
	
	\begin{enumerate}[leftmargin=*]
		\item \textbf{Block extraction.}
		Parse the downloaded $C_F$ as $c_1 \| \cdots \| c_m$ according to
		fixed block size set by seller.
		\item \textbf{Plaintext recovery.}
		Reconstruct the file key $k = k_1 \oplus k_2$, where
		$k_1 = H_2(e(H_1(\tau'), C_0))$.  For each challenged index $i$,
		derive $\kappa_i = H_4(k \| i)$ and decrypt $b_i \gets \mathsf{AES.Dec}(\kappa_i, c_i)$.
		\item \textbf{Merkle proof computation.}
		For each challenged index $i$, produce Merkle proof $\pi_i$ that
		attests $H_4(c_i)$ is a leaf of the tree whose root is the on‑chain
		value $\mathsf{root}$.  The proof is standard sibling path from
		the leaf to the root.
	\end{enumerate}
	
	\noindent\textbf{Phase-2:} Then buyer submits $(\mathsf{sid}, \mathcal{I}, \{c_i, b_i, \pi_i\}_{i\in\mathcal{I}})$
	to $\mathcal{SC}$.  The verification logic executed by $\mathcal{SC}$ in Alg~\ref{alg:contract}:
	\begin{enumerate}[leftmargin=*]
		\item \textbf{Merkle proof validity.} For each $i$, recompute the root from
		$H_4(c_i)$ and $\pi_i$ and check equality with stored $\mathsf{root}$.
		\item \textbf{Signature validity.} $\mathsf{Verify}_{\mathsf{pk}_S}(\mathsf{root}, \sigma_{\mathsf{root}})$.
		\item \textbf{Re‑encryption consistency.} For each $i$, check
		$\mathsf{AES.Enc}(\kappa_i, b_i) = c_i$ where $\kappa_i = H_4(k \| i)$ and $k$ is the
		buyer‑claimed file key (or derived from the revealed $k_2$ and the buyer's
		submitted $k_1$).
	\end{enumerate}
	If all checks pass for every $i \in \mathcal{I}$, the key is correct;
	otherwise the seller cheated.  The signed Merkle root prevents the buyer
	from fabricating blocks, and the random challenge guarantees detection
	of a cheating seller with probability at least $1-(1-1/m)^\ell$.
	The challenge transaction is paid by the buyer, so an honest seller
	incurs no on‑chain cost and need not be online during dispute resolution.
	

	\section{Security Analysis}
	\label{sec:security-analysis}
	
	We now state the main security theorems of PriME‑Deal and give a sketch of
	their proofs.  The formal, self‑contained proofs are deferred to
	Appendix~\ref{app:full-proofs}.  All assumptions and definitions are as
	introduced in Section~\ref{sec:prelim} and Section~\ref{sec:security-model}.
	\noindent \textbf{Assumption mapping (informal).}
	Policy privacy relies on the value‑hiding property of the OKVS+PRF
	composition; token unforgeability on co‑CDH (BLS); key secrecy on CBDH;
	buyer compliance on ZK knowledge‑soundness; buyer attribute privacy on ZK
	zero‑knowledge; audit security on collision‑resistant hashing and
	ciphertext integrity; fair‑exchange atomicity additionally depends on an
	economic rationality assumption (collateral $d>v$).
	
	\subsection{UC‑Security (Weak)}
	
	\begin{theorem}[UC‑Security]
		\label{thm:uc-weak}
		Under the same assumptions as in Theorem~\ref{thm:uc-weak-full}, the
		PriME‑Deal protocol $\Pi_{\mathsf{PD}}$ UC‑realizes
		$\mathcal{F}_{\mathsf{PPFE}}^{\mathsf{weak}}$ in the
		$\mathcal{F}_{\mathsf{ledger}}$-hybrid model against static corruptions
		(the CA is never corrupted).
	\end{theorem}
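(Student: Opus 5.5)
We would prove the theorem by the standard UC simulation argument. We construct, for every PPT real-world adversary $\mathcal{A}$, an ideal-world simulator $\mathsf{Sim}$ that interacts only with $\mathcal{F}_{\mathsf{PPFE}}^{\mathsf{weak}}$ and $\mathcal{F}_{\mathsf{ledger}}$. We then show, via a sequence of hybrids, that no environment $\mathcal{Z}$ can distinguish the real execution of $\Pi_{\mathsf{PD}}$ from the ideal one.

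$\mathsf{Sim}$ programs the random oracles $H_1,\dots,H_4$ and generates the Groth16 CRS with a simulation trapdoor and an extraction trapdoor. Since the CA is honest, $\mathsf{Sim}$ also plays the CA: it holds $\mathsf{msk}=s$ and answers registration and key queries for corrupted parties honestly. We split the argument into four static corruption cases.

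\textbf{(a) Honest seller, honest buyer.} $\mathsf{Sim}$ receives only the leakage $(|\mathcal{A}_S|,|\mathcal{P}_B|,t_b,v,d)$. From it, it fabricates $\mathsf{aux}$ as follows:
\begin{itemize}
\item $D$ is a uniformly random vector in $\mathbb{F}_p^m$;
\item each pair $(g^{r_i},\mathsf{tag}_i)$ uses a fresh random $r_i$ and a random tag;
\item $C_0$ is random;
\item $S_0$, $\sigma_{\mathsf{root}}$ and $S_1$ are honest BLS signatures produced with a seller key $\mathsf{Sim}$ controls;
\item $C_F$ encrypts zero blocks under a random key;
\item the buyer's $\pi$ is a simulated proof on a random commitment $\mathsf{com}_{\mathcal{A}_B}$.
\end{itemize}
When $\mathcal{F}$ releases $K'$ at $\mathsf{KeyReveal}$, $\mathsf{Sim}$ programs $H_4$ so that $h_{k_2}$ and $C_F$ remain consistent.

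\textbf{(b) Honest seller, corrupted buyer.} $\mathsf{Sim}$ answers the adversary's OKVS probes through $\mathcal{F}$'s $\mathsf{Probe}$ interface. For each $x\in\mathcal{P}_B$ held by the corrupted buyer, it programs $H_2$ and $H_3$ so that the decoded value unmasks to a Shamir share of $\tau$. On a submitted proof, it runs the SNARK extractor to obtain $(\mathcal{A}_B,\mathcal{U}_0,\tau')$ and forwards $\mathsf{Buy}$ with the extracted $\mathcal{A}_B$.

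\textbf{(c) Corrupted seller, honest buyer.} $\mathsf{Sim}$ reads $\mathcal{P}_B$ and $\tau$ off the adversary's $\mathsf{Publish}$ transcript: it observes the $H_3(p_i)$ and $H_1(p_i)$ queries and uses $s$ to unmask the $y_i$. It then sends $\mathsf{Publish}$ to $\mathcal{F}$. When the adversary reveals $k_2'$, $\mathsf{Sim}$ forwards it as $\mathsf{KeyReveal}$ if $H_4(k_2')=h_{k_2}$; otherwise it lets the timeout or the audit path fire.

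\textbf{(d) Both corrupted.} This case is trivial.

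The hybrids proceed in the following order.
\begin{itemize}
\item $\mathsf{H}_1$: replace real proofs by simulated ones, using zero-knowledge of Groth16.
\item $\mathsf{H}_2$: replace the $\mathsf{PRF}(\mathsf{s}_\nu,\cdot)$ and $\mathsf{PRF}(\mathsf{s}_r,\cdot)$ outputs on policy attributes not held by corrupted buyers with truly random values (PRF security). Since $\mathsf{s}_\nu$ is public, the masking must instead come from $H_3(K_i)$. This hybrid therefore also replaces $K_i=H_2(e(H_1(p_i),\mathsf{mpk})^{r_i})$ by random values for attributes whose $D_a$ the adversary lacks. We reduce to CBDH by embedding $g^a=\mathsf{mpk}$, $H_1(p_i)=g_1^{b}$ and $g^{r_i}=g^c$; an adversary querying $H_2$ on the correct pairing value yields $e(g,g_1)^{abc}$.
\item $\mathsf{H}_3$: with every $y_i$ now uniformly masked, replace $D$ by a random vector, using obliviousness and linearity of the OKVS together with uniform values.
\item $\mathsf{H}_4$: replace $k_1=H_2(e(H_1(\tau),C_0))$ by random when $\tau$ is unknown. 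Shamir secrecy ensures that fewer than $t_b$ unmasked shares leave $\tau$ uniform, so the adversary cannot query $H_1(\tau)$ except with negligible probability.
\item $\mathsf{H}_5$: abort if a corrupted buyer produces an accepting proof from which extraction fails, or whose extracted witness violates $|\mathcal{P}_B\cap\mathcal{A}_B|\ge t_b$. This uses knowledge soundness, together with co-CDH unforgeability of $\mathsf{cred}$, $S_0$ and $D_a$: forging $H_1(\cdot)^{s}$ for an uncertified attribute gives $g_1^s$ from $g^s$.
\item $\mathsf{H}_6$: abort if a corrupted seller reveals $k_2'\neq k_2$ with $H_4(k_2')=h_{k_2}$, which requires an $H_4$ collision. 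Audit soundness then follows from Merkle collision resistance and INT-CTXT of AES-GCM, so the on-chain outcomes of $\texttt{challenge}$ and $\texttt{reclaim}$ match $\mathcal{F}$'s $\mathsf{KeyReveal}$ and $\mathsf{Timeout}$ transitions.
\end{itemize}
The final hybrid is exactly the ideal execution with $\mathsf{Sim}$, and summing the advantages bounds $\mathcal{Z}$'s distinguishing advantage by a negligible function.

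The main obstacle is $\mathsf{H}_2$/$\mathsf{H}_3$. The seed $\mathsf{s}_\nu$ is published in $\mathsf{info}$, so PRF masking alone gives no hiding; privacy must rest entirely on the $H_3(K_i)$ mask and hence on CBDH. A further problem is that the tags $\mathsf{tag}_i=H_4(K_i)$ plus the deterministic $r_i$ let anyone holding $D_a$ test membership of $a$. The simulation must therefore handle adaptively corrupted-buyer key queries by programming $H_2$ and $H_3$ consistently at the moment $D_a$ is issued. We would do this by guessing, CBDH-style, which policy attribute to embed the challenge in, accepting a polynomial loss of $1/n_b$ per hybrid step and iterating over the $n_b$ shares.
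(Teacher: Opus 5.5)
Your simulator has the same skeleton as the paper's: a simulated CRS and proofs, an OKVS filled with random values (justified by value-hiding), knowledge-soundness extraction for a corrupted buyer, and on-chain reveal/timeout mirrored onto $\mathsf{KeyReveal}$/$\mathsf{Timeout}$. You refine it into more hybrids, but two of the steps you add to make it rigorous fail for the protocol as specified.

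\textbf{First gap: $s$ is not hidden.} Your fix for the main obstacle does not work. $\mathsf{H}_2$ embeds $\mathsf{mpk}=g^a$ in a CBDH instance, and $\mathsf{H}_5$ invokes co-CDH; both presuppose that $s$ is hidden. However, every registered user receives $\mathsf{sk}_{\mathsf{uid}}=s\cdot H_3(\mathsf{uid})$, with $H_3(\mathsf{uid})$ public. So as soon as any registered party is corrupted (in particular in your cases (b) and (c)), the adversary computes $s=\mathsf{sk}_{\mathsf{uid}}\cdot H_3(\mathsf{uid})^{-1}$, and the reduction cannot even answer that corruption. Knowing $s$, the adversary forms $H_1(x)^s$ for every $x$. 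It can then test any candidate attribute against the tags, strip every $H_3(K_i)$ mask, and forge credentials. The set of attributes whose $D_a$ it lacks is therefore empty, so $\mathsf{H}_2$ replaces nothing and the premise of $\mathsf{H}_3$ (all $y_i$ uniformly masked) is false. Guessing with a $1/n_b$ loss does not help, because the embedding already breaks at the first corruption, not when some $D_a$ is issued.

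\textbf{Second gap: the audit does not match the functionality.} The claim in $\mathsf{H}_6$ that the outcomes of \texttt{challenge} and \texttt{reclaim} match the functionality's transitions fails in both corruption cases.

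\emph{Honest seller, corrupted buyer.} Your case (b) never treats the challenge path. After the valid reveal, $k_2$ is public, so the buyer learns $k$ and the file. It then calls \texttt{challenge} with a wrong $b_i$, or a wrong claimed $k_1$, which nothing on-chain binds since it depends on the secret $\tau$. Algorithm~\ref{alg:contract} treats any failed check as seller fraud and pays the buyer $v+d$. In $\mathcal{F}_{\mathsf{PPFE}}^{\mathsf{weak}}$, by contrast, a valid $\mathsf{KeyReveal}$ pays the seller and closes the listing. Merkle collision resistance and INT-CTXT only stop a check from passing falsely, not from failing, so this is an attack on the protocol, not a proof artifact.

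\emph{Corrupted seller.} The cheat is not an $H_4$ collision. It is revealing the genuine preimage of $h_{k_2}$ while $C_F$ was encrypted under another key, which is exactly the cheating the paper's fairness game describes. Forwarding $\mathsf{KeyReveal}$ on a hash match therefore pays the seller in the ideal world, while the real audit refunds the buyer. Suppose $\mathsf{Sim}$ instead first tests decryptability using its extracted $\tau,k_2$. The audit still samples only $\ell$ of $m$ blocks, so a seller spoiling one block is paid with probability about $1-\ell/m$ while the honest buyer cannot decrypt. No transition of the functionality produces this outcome.

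\textbf{Two smaller failures.} Once an adversary able to reconstruct $\tau$ sees $k_2$, programming $\kappa_i=H_4(k\|i)$ cannot reopen AES-GCM encryptions of zero blocks to the environment's file. Only the key derivation is programmable, not the cipher, and the functionality never gives $\mathsf{Sim}$ the file anyway. Also, the functionality checks $H_4(K')=h_K$ for a $K$ it sampled itself, so a forwarded $k_2'$ is rejected.

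The paper's sketch passes over all of this, since it counts only the OKVS swap and the ZK simulation as nontrivial. Closing these gaps requires changes to the scheme and to $\mathcal{F}_{\mathsf{PPFE}}^{\mathsf{weak}}$, not a finer hybrid argument.
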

	
	\noindent \textbf{Proof sketch.}
	The simulator $\mathcal{S}$ emulates the real‑world adversary and translates
	between the ideal functionality and the real protocol.
	\begin{itemize}[leftmargin=10px]
		\item \emph{Setup and registration} are performed honestly, except that the
		CRS is taken from the ZK simulator.
		\item \emph{Honest seller publish:} $\mathcal{S}$ submits the
		$(\mathsf{Publish})$ message to $\mathcal{F}_{\mathsf{PPFE}}^{\mathsf{weak}}$
		and receives back only the policy sizes.  It then builds a fake OKVS $D$
		that encodes random values under dummy keys, and computes all other
		components ($C_0$, $S_0$, $C_F$, $\mathcal{C}$) honestly using the
		token $\tau$ and key $k_2$ obtained from the ideal functionality.
		Value‑hiding of OKVS+PRF makes this $D$ indistinguishable from a real
		encoding.
		\item \emph{Honest buyer:} When the functionality accepts (i.e., the buyer
		meets the seller's policy and the ZK proof is valid), $\mathcal{S}$ uses
		its knowledge of $\tau$ to craft a valid witness and lets the ZK simulator
		produce a proof $\pi$.  Zero‑knowledge guarantees that the simulated
		proof is indistinguishable.  When the functionality rejects, $\mathcal{S}$
		simply runs the honest buyer algorithm on the dummy OKVS; reconstruction
		fails and the buyer aborts, exactly as in the real execution.
		\item \emph{Key reveal / timeout} are directly mirrored between the ideal
		functionality and the contract.
		\item \emph{Corrupted parties:} For a corrupted seller $\mathcal{S}$ follows
		the real protocol; for a corrupted buyer it extracts a witness from a valid
		$\pi$ via knowledge‑soundness and supplies it to the functionality to
		pass the buyer‑side policy check.
	\end{itemize}
	A hybrid argument in Appendix~\ref{app:uc} shows that the real and ideal views are computationally
	indistinguishable under the stated assumptions.
	
	\subsection{Policy Privacy}
	
	\begin{theorem}[Policy Privacy]
		\label{thm:priv}
		If the OKVS+PRF composition is value‑hiding, then
		$\mathsf{Priv}_{\mathcal{A}}^{\mathsf{policy}}(\lambda)$ is negligible for
		any PPT adversary $\mathcal{A}$.
	\end{theorem}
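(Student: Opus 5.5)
The plan is a sequence of hybrid games, starting from $\mathsf{Priv}_{\mathcal{A}}^{\mathsf{policy}}$ with $b=0$ and ending at $b=1$. Each step replaces one $b$-dependent component of the challenge output $\mathsf{aux}=(\mathsf{info},S_1)$ and of the on-chain deposit by a value that does not depend on $b$, and bounds the change in $\mathcal{A}$'s view by one primitive's advantage. The $b$-dependent components of Algorithm~\ref{alg:publish} are:
\begin{itemize}
\item the OKVS vector $D$ and its hash $h_D$;
\item the matching ciphertexts $\mathcal{C}=\{(g^{r_i},\mathsf{tag}_i)\}$;
\item the Cuckoo hint $\mathsf{CF}_{\mathsf{hint}}$;
\item the commitment $\mathsf{com}_{\mathcal{P}_B}$;
\item the file ciphertext $C_F$ with its Merkle root and $h_{k_2}$;
\item the seller's credential $\mathsf{IC}_S$.
\end{itemize}
The signatures $S_0,S_1,\sigma_{\mathsf{root}}$ are recomputed honestly in every hybrid with the seller key the challenger already holds, so they add no further difference.

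\textbf{Step 1: mask.} In $H_1$ I replace $\mathsf{PRF}(\mathsf{s}_\nu,\cdot)$ and $\mathsf{PRF}(\mathsf{s}_r,\cdot)$ by truly random functions. This is justified by PRF security, since $\mathsf{s}_\nu,\mathsf{s}_r$ are fresh and, in the challenge listing, only evaluations appear.

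\textbf{Step 2: OKVS.} In $H_2$ I use value-hiding together with obliviousness: each $y_i=s_i+\nu_i+H_3(K_i)$ is now uniform, so I replace $D$ by $\mathsf{Encode}$ of uniform values under the keys $\mathcal{P}_{B,b}$. Obliviousness for equal-size key sets (recall $|\mathcal{P}_{B,0}|=|\mathcal{P}_{B,1}|$) then lets me switch the keys to $\mathcal{P}_{B,1-b}$. Here the key-query restriction matters. A buyer key $D_a$ for some $a\in\mathcal{P}_{B,b}$ lets $\mathcal{A}$ strip the $H_3(K_i)$ mask. However, the PRF mask $\nu_i$ still hides $s_i$, and with $\mathsf{s}_\nu$ in $\mathsf{info}$ the remaining protection is Shamir's threshold hiding. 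No queried buyer satisfies both policies, and fewer than $t_b$ shares reveal nothing about $\tau$. Hence I would argue share-by-share that the decoded values for the queried attributes are distributed identically in both branches, except on the branch a satisfying buyer could open. That branch I handle by requiring $F_0=F_1$ or by invoking key secrecy (CBDH) for $k_1$.

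\textbf{Step 3: ciphertexts.} In $H_3$ I replace each $K_i=H_2(e(H_1(p_i),\mathsf{mpk})^{r_i})$ by a random string. This reduces to CBDH in the random-oracle model: embed $g^{r_i}$, $\mathsf{mpk}=g^s$ and $H_1(p_i)$ from the challenge, and extract the CBDH value from $\mathcal{A}$'s $H_2$ queries. It applies to every $p_i$ for which $\mathcal{A}$ has no key. The values $\mathsf{com}_{\mathcal{P}_B}$ and $h_D$ are random-oracle outputs on unqueried inputs, since $\tau$ is hidden. Finally, $C_F$ is switched from $F_b$ to $F_{1-b}$ by IND-CPA of AES-GCM, because $k_2$ is never revealed in the game.

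\textbf{Main obstacles.} I expect two steps to be hardest. First, $\mathsf{CF}_{\mathsf{hint}}$ stores $H_3(p_i)$, and $\mathsf{IC}_S$ publishes $\mathcal{A}_S$ in the clear. As written, an adversary who chooses $\mathcal{P}_{B,0}\neq\mathcal{P}_{B,1}$ can query $H_3$ on a candidate $p$ and look it up. Likewise, anyone holding $D_a$ can test $\mathsf{tag}_i$ for a policy attribute $a$. My first attempt would fold these leakages into the "OKVS+PRF composition" hypothesis, treating the whole public encoding (OKVS, tags, hint) as a single value-hiding object. Failing that, I would restrict the game so that $\mathcal{A}_{S,0}=\mathcal{A}_{S,1}$ (seller attributes are public by design), and model attributes as unpredictable to $\mathcal{A}$ so that $H_3(p_i)$ stays random-looking except on queried points. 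Second, the key-query restriction interacts delicately with Step 2. I would handle it by guessing which branch the queried keys can open, at a polynomial loss.
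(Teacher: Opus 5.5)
\textbf{There is a genuine gap, and the stated theorem cannot be proved under the game of Section~\ref{sec:games}.} The leaks you defer to ``main obstacles'' are actual distinguishing attacks that the game permits:
\begin{itemize}
\item $\mathsf{info}$ contains $\mathsf{IC}_S$ with $\mathcal{A}_{S,b}$ in the clear, and Init allows $\mathcal{A}_{S,0}\neq\mathcal{A}_{S,1}$.
\item $\mathcal{A}$ itself outputs $\mathcal{P}_{B,0}\neq\mathcal{P}_{B,1}$. Looking up $H_3(p)$ in $\mathsf{CF}_{\mathsf{hint}}$ for some $p\in\mathcal{P}_{B,0}\setminus\mathcal{P}_{B,1}$ therefore wins with probability about $1-\epsilon/2$.
\item A buyer key for the single attribute set $\{p\}$ never satisfies both policies, so it is allowed. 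It gives $D_p=H_1(p)^s$ and the test $H_4(H_2(e(C_i^*,D_p)))\stackrel{?}{=}\mathsf{tag}_i$.
\item $D_a$ does not depend on $\mathsf{uid}_B$, so keys of different identities can be pooled.
\item A buyer satisfying only one branch is permitted, so $\mathcal{A}$ can reconstruct $\tau$ in that branch and check $S_0$.
\item Any seller signing key $\mathsf{sk}_S=s\cdot H_3(\mathsf{uid}_S)$, which the game hands out without restriction, reveals $s$ outright.
\end{itemize}
Restricting to $\mathcal{A}_{S,0}=\mathcal{A}_{S,1}$ is a change of the game, not a proof step. Your other repairs do not work:
\begin{itemize}
\item Folding the leakage into ``value-hiding'' assumes far more than the definition gives, since it speaks only of $\mathsf{Decode}(D,x)$.
\item ``Unpredictable attributes'' contradicts Init, where $\mathcal{A}$ chooses the policies.
\item Requiring $F_0=F_1$ or invoking CBDH for $k_1$ is beside the point. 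The tag and $S_0$ tests never touch $C_F$, and a buyer who recovers $\tau$ computes $k_1=H_2(e(H_1(\tau),C_0))$ from public data anyway.
\item Guessing a branch at polynomial loss cannot absorb an actual attack.
\end{itemize}
In particular, your Step~2 claim that decoded values of queried attributes are identically distributed in both branches overlooks that the mere existence of a matching tag is the leaked bit.

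Two steps are also wrong on their own terms. In Step~1, $\mathsf{s}_\nu$ is published in $\mathsf{info}$ (Algorithm~\ref{alg:publish}), so PRF security applies only to $\mathsf{s}_r$; your Step~2 even notes this, contradicting Step~1. The only computational mask on $y_i$ is $H_3(K_i)$. Step~2 must therefore follow Step~3, and it covers only positions with no known $D_{p_i}$; this is how the paper argues the point. In Step~3, a reduction that embeds $\mathsf{mpk}=g^s$ as a CBDH instance cannot answer seller signing-key queries, since those keys are public multiples of $s$.

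For comparison, the paper uses just two hybrids. It replaces $D$ by an encoding of random values via value-hiding, then asserts that $S_0,C_0,C_F,\mathcal{C}$ depend only on $\tau$ and fresh randomness, so the transcript is independent of $b$. That assertion is false for $\mathcal{C}$, which is built from $\mathcal{P}_{B,b}$, and for $C_F$, which encrypts $F_b$. The paper's proof also never accounts for $\mathsf{CF}_{\mathsf{hint}}$, $\mathsf{IC}_S$, or keys in the symmetric difference. Your component-by-component inventory, with obliviousness to switch the OKVS keys and IND-CPA for $C_F$, is thus a more faithful analysis than the paper's.

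Its correct conclusion, however, is that the game must be restated before anything can be proved. At minimum the restated game needs:
\begin{itemize}
\item $\mathcal{A}_{S,0}=\mathcal{A}_{S,1}$;
\item no publicly computable hint;
\item no seller-key queries, or a key derivation that does not expose $s$;
\item pooled queried attributes disjoint from $\mathcal{P}_{B,0}\triangle\mathcal{P}_{B,1}$, with fewer than $t_b$ of them in $\mathcal{P}_{B,0}\cap\mathcal{P}_{B,1}$. Otherwise $\tau$ is recovered in both branches and $\mathsf{com}_{\mathcal{P}_B}$ separates them.
\end{itemize}
Only against such a game do your hybrids have a chance to go through.
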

	
	\noindent \textbf{Proof sketch.}
	We define two hybrids: $H_0$ (real) and $H_1$ where the OKVS $D$ is replaced
	by an encoding of independent random values.  Value‑hiding guarantees that
	$H_0$ and $H_1$ are indistinguishable, because the adversary is not allowed
	to possess a set of attributes that meets the threshold $t_b$ for both
	challenge branches.  In $H_1$, $D$ contains no information about $\tau$, and
	all other components ($S_0$, $C_0$, $C_F$, $\mathcal{C}$) depend only on the
	random $\tau$ and independent secrets.  Hence the challenge transcript is
	statistically independent of the bit $b$, giving $\Pr[\text{win}] = 1/2$.
	The negligible gap between $H_0$ and $H_1$ bounds the advantage in the real
	game.
	
	\subsection{Optimistic Fair Exchange with On‑Chain Audit}
	
	\begin{theorem}[Fair Exchange]
		\label{thm:fair}
		Under knowledge‑soundness of the zk‑SNARK, collision‑resistance of $H_4$,
		ciphertext integrity of the symmetric encryption, and the correct
		execution of the on‑chain audit (Algorithm~\ref{alg:contract}), for any
		PPT adversary corrupting the seller or the buyer,
		$\Pr[\text{win in }\mathsf{Fair}_{\mathcal{A}}^{\mathsf{exch}}] \le
		\mathsf{negl}(\lambda)$.
	\end{theorem}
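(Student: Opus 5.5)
The plan is a case analysis on the corruption bit $c$ of $\mathsf{Fair}_{\mathcal{A}}^{\mathsf{exch}}$. In each case I enumerate every terminal state of the contract in Algorithm~\ref{alg:contract}:
\begin{itemize}
\item reject at $\mathsf{Verify}_{\mathsf{SNARK}}$;
\item $\texttt{reclaim}$ after $\Delta T$;
\item reveal followed by no challenge;
\item reveal followed by a passing audit;
\item reveal followed by a failing audit.
\end{itemize}
For each state I show that $E_1$ and $E_2$ agree except with negligible probability. The payment $v$ reaches the seller only on the paths ``reveal + no challenge'' and ``reveal + audit passes''. So it suffices to show two things. First, on each of these paths the buyer holds a key $k$ with $\mathsf{AES.Dec}(k,C_F)\neq\bot$. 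Second, on every other path the buyer obtains no such key.

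\textbf{Corrupted seller ($c=0$).} The honest buyer only submits $\mathsf{tx}$ after Phase~3 of Algorithm~\ref{alg:buy} succeeds, so it holds a $\tau'$ with a valid $S_0$.

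\emph{No reveal.} If $\mathcal{A}$ never reveals a $k_2$ with $H_4(k_2)=h_{k_2}$, the honest buyer calls $\texttt{reclaim}$ and $E_2$ fails. In this case I argue $E_1$ also fails. The key $k=k_1\oplus k_2$ has $k_2$ uniform and known only through the random-oracle value $H_4(k_2)$. Hence the buyer, lacking $k_2$, cannot produce the block keys $\kappa_i=H_4(k\|i)$. (If $\mathcal{A}$ leaks $k_2$ off-chain, that only helps the honest party, and the game's challenger counts it; I would note that the definition is really about the adversary gaining an unfair outcome.)

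\emph{Reveal, decryption succeeds.} Then $E_1$ holds. The honest buyer does not challenge, so $E_2$ holds.

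\emph{Reveal, decryption fails.} Then the honest buyer challenges, and I must show the audit fails except with negligible probability, so that $v$ is refunded. This is the main obstacle. Since $\mathsf{root}$ is signed by $\mathcal{A}$, collision resistance of $H_4$ fixes the blocks $c_i$ committed to. The buyer submits $b_i=\mathsf{AES.Dec}(\kappa_i,c_i)$. Ciphertext integrity implies that a block failing authentication yields $b_i=\bot$, so the re-encryption check fails. Detection therefore reduces to sampling a bad block, with miss probability $(1-\beta)^\ell$ where $\beta$ is the fraction of bad blocks. I would try first to strengthen this: let the buyer point the audit at a specific failing index (deterministic detection), or argue that $\ell$ is chosen so that $(1-1/m)^\ell$ is negligible. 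Otherwise the bound is only as claimed in Section~\ref{sec:audit-evidence}, and I would state it as a parameter condition.

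\textbf{Corrupted buyer ($c=1$).} The honest seller reveals the correct $k_2$ within $\Delta T$ whenever $\pi$ is accepted. The audit then always passes by correctness of AES-GCM and the Merkle tree, so $E_2$ holds whenever the revealed $k_2$ together with the buyer's $k_1$ decrypts. The only winning path is therefore $E_1$ without $E_2$, i.e.\ the buyer decrypting without paying. This splits into two sub-cases.

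\emph{Decrypting without a proof.} The buyer must know $k_2$ before a reveal, which happens only after a valid $\pi$ freezes $v$. Obtaining $k_2$ earlier means inverting the random oracle $H_4$ on $h_{k_2}$, which is negligible.

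\emph{Forged proof or false dispute.} Knowledge soundness extracts a witness satisfying Algorithm~\ref{alg:buyer-circuit}, so a forged proof without qualifying attributes is ruled out. A false challenge after a correct reveal fails because each honestly produced $c_i$ re-encrypts correctly. Collision resistance prevents substituting blocks that verify against $\mathsf{root}$. Hence the audit passes and $v$ goes to the seller.

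Summing the negligible terms (SNARK extraction, $H_4$ collisions and preimages, INT-CTXT, and audit miss probability) yields the theorem.
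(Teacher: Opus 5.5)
\textbf{Gap in the corrupted-buyer case.} This case rests on a step that fails: ``a false challenge after a correct reveal fails because each honestly produced $c_i$ re-encrypts correctly.''

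The audit of Section~\ref{sec:audit-evidence} never re-encrypts the \emph{correct} plaintext under the \emph{correct} key. It checks $\mathsf{AES.Enc}(H_4(k\|i), b_i) = c_i$, and the block $b_i$, the Merkle path $\pi_i$, and the key $k$ (or $k_1$) are all supplied by the buyer. Algorithm~\ref{alg:contract} then attributes \emph{any} failed check to the seller. The contract cannot validate the claimed $k_1 = H_2(e(H_1(\tau),C_0))$, because it depends on the secret $\tau$ and nothing on-chain binds it.

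A corrupted buyer can therefore win as follows:
\begin{enumerate}
\item Pay with a valid $\pi$ and wait for the honest seller's correct $k_2$ in $\mathsf{KeyReleased}$.
\item Decrypt $C_F$, so $E_1$ holds.
\item Call $\texttt{challenge}$ with some $b_i \neq F_i$, a wrong $k_1$, or a malformed path.
\item The check fails, so $v$ is refunded and $d$ is paid to the buyer; $E_2$ fails.
\end{enumerate}
This wins with probability $1$. Collision resistance of $H_4$ and AES-GCM correctness do not help, because the buyer never substitutes a committed block; it only misreports the plaintext or the key. The paper's own Case~1 has the same hole, since it never considers the dispute window. No argument can close this case for Algorithm~\ref{alg:contract} as written.

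\textbf{What would fix it, and two further points.} To make the statement provable, the audit must be redesigned so that its verdict depends only on data the contract can check. One possible design has three parts. The buyer proves in zero knowledge that the submitted $k_1$ is derived from the token certified by $S_0$ and $C_0$. The contract itself runs $\mathsf{AES.Dec}(H_4(k\|i), c_i)$ on a Merkle-authenticated block. The seller is penalised only if this returns $\bot$, and every other outcome pays the seller. That change would also realise your ``point at a failing index'' idea, giving deterministic detection via INT-CTXT.

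Without such a change, as you rightly flag, the contract-sampled $\mathcal{I}$ bounds the miss probability only by roughly $(1-1/m)^\ell$ against a single bad block. This is non-negligible unless essentially every block is challenged, so the paper's ``overwhelming probability'' claim in the corrupted-seller case also needs a parameter condition.

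A smaller point concerns your no-reveal sub-case. $E_1$ fails there because the honest buyer only learns $k_2$ from $\mathsf{KeyReleased}$, not because $k_2$ is uniform: a corrupted seller chooses $k_2$ itself.
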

	
	\noindent \textbf{Proof sketch.}
	We show that the adversary cannot achieve exactly one of $E_1$ (buyer
	decrypts) and $E_2$ (seller gets paid).
	\begin{itemize}[leftmargin=10px]
		\item \emph{Malicious buyer:} To obtain $k_2$ without paying, the buyer
		must cause the seller to reveal $k_2$ without a valid payment.  The honest
		seller reveals $k_2$ only after a valid buyer transaction, which requires
		a proof $\pi$ that, by knowledge‑soundness, guarantees possession of
		attributes satisfying $|\mathcal{P}_B \cap \mathcal{A}_B| \ge t_b$.
		Payment is then frozen and transferred upon the seller's honest reveal,
		making both $E_1$ and $E_2$ true.
		\item \emph{Malicious seller:} If the seller reveals a wrong $k_2'$, the
		honest buyer detects the decryption failure and triggers the on‑chain audit.
		The contract verifies Merkle proofs and re‑encryption consistency; the
		mismatch is detected with overwhelming probability (random challenge set
		$\mathcal{I}$ from $m$ blocks), leading to refund of $v$ and forfeiture of
		the collateral $d$, making $E_2$ false.  If the seller reveals the correct
		$k_2$, both $E_1$ and $E_2$ are true.
	\end{itemize}
	Thus in all cases $\Pr[E_1 \neq E_2]$ is negligible.
	
	\subsection{Discussion}
	\label{sec:discussion}
	
	\noindent\textbf{Privacy leakage.}
	As argued in Section~\ref{sec:protocol}, the token reconstruction inherently reveals to the buyer which of its own
	attributes belong to $\mathcal{P}_B$; this is necessary for share
	identification and is consistent with our policy privacy definition
	(protection against unauthorized parties).  On‑chain metadata includes
	policy size $n_b$, threshold $t_b$, Cuckoo filter hint
	$\mathsf{CF}_{\mathsf{hint}}$, OKVS hash $h_D$, commitment
	$\mathsf{com}_{\mathcal{A}_B}$, Merkle root $\mathsf{root}$, and key hash
	$h_{k_2}$.  These values do not enable an unauthorized party to test
	attribute membership because the latter requires computing
	$e(C_i, D_a)$, which is infeasible without the attribute key or the master
	secret.  The leakage is thus limited to coarse aggregates and is acceptable
	in a privacy‑preserving marketplace.
	
	\noindent\textbf{Trust model of the CA.}
	The protocol relies on a fully trusted CA.  For consortium deployments this is
	acceptable; for open settings, the CA can be decentralized using threshold BLS
	or multi‑authority ABE.
	
	\noindent\textbf{ZKP circuit efficiency.}
	The circuit for $\mathcal{R}_{\mathsf{buy}}$ contains $t_b+2$ pairing checks
	and compiles to $\approx 175\,000$ R1CS constraints for $t_b=5$.  As shown in
	Table~\ref{tab:zkp-stability}, proving and verification times are independent
	of the attribute scale, averaging $595$\,ms ($\sigma < 15$\,ms) and $540$\,ms
	($\sigma < 10$\,ms) respectively, while on‑chain verification consumes a
	constant $28.60$\,M gas ($\sigma < 500$ gas).  These results confirm the
	practicality of the approach for typical thresholds ($t_b \le 5$).
	
	\noindent\textbf{Collusion resistance.}
	The ZK proof binds all used attribute keys to a single buyer identity
	$\mathsf{uid}_B$ via the secret $\mathsf{sk}_{\mathsf{uid}_B}$.  Each
	attribute key $D_a$ is derived as $D_a = H_1(a)^{\mathsf{sk}_{\mathsf{uid}_B}}$,
	and the circuit requires that all $D_a$ appearing in the reconstruction
	share the same $\mathsf{sk}_{\mathsf{uid}_B}$.  Since the CA derives
	$\mathsf{sk}_{\mathsf{uid}_B}$ uniquely from $\mathsf{uid}_B$, a proof
	can only combine attributes that belong to the \emph{same} buyer.
	Two buyers who pool their attributes would need to use two different
	$\mathsf{sk}_{\mathsf{uid}}$ values within a single proof, which the
	circuit prohibits.  Hence collusion does not yield a valid witness.
	
	\noindent\textbf{Economic barrier to probing.}
	The value‑hiding OKVS makes blind probing of the policy expensive: each
	on‑chain verification attempt costs gas, and the seller can choose a large
	enough policy to make exhaustive enumeration infeasible.

	\section{Performance Evaluation}
	\label{sec:evaluation}
	
	We evaluate PriME‑Deal\footnote{Publicly available at \url{https://anonymous.4open.science/r/PriME-Deal/}} through three Research Questions (RQs) designed to systematically address the core aspects of our approach. By attempting to answer these inquiries, we aim to provide transparent insights into the methodology and its empirical validity.
	
	\begin{itemize}[leftmargin=*]
		\item \textbf{RQ1 (End‑to‑end feasibility):} Are the off‑chain
		computation times of both parties, the ZKP overhead, and the on‑chain
		gas cost within practical bounds?
		\item \textbf{RQ2 (Attribute scalability \& filter):} How do
		attribute‑set sizes influence the buyer’s workload, to what extent does
		the Cuckoo filter mitigate this influence, and can false positives
		compromise correctness?
		\item \textbf{RQ3 (Comparison with prior work):} How does PriME‑Deal
		compare against the state‑of‑the‑art fuzzy identity‑based matchmaking
		encryption scheme?
	\end{itemize}
	
	\subsection{Experimental Setup}
	\label{sec:exp-setup}
	
	\noindent\textbf{Hardware.}
	All measurements are taken on an Intel 
	Core i7‑11700 (2.50\,GHz, 16 cores) with 32\,GB RAM,
	running Ubuntu 22.04 LTS.
	
	\noindent\textbf{Cryptographic core (Python).}
	The seller and buyer off‑chain operations are implemented in Python~3.6.13
	with the Charm~0.50 library~\cite{akinyele2013charm}.  We instantiate the
	bilinear group with the \textbf{BN254} curve (Type‑3 asymmetric pairing,
	128‑bit security).  $H_1$ maps to $\mathbb{G}_1$; the master public key,
	ciphertexts, and the seller’s public key reside in $\mathbb{G}$.
	The PRF is HMAC‑SHA256 truncated modulo $p$.  The Cuckoo filter is emulated
	by a Bloom filter with a configurable false‑positive rate $\varepsilon$;
	the OKVS is replaced by a dictionary‑based simulator whose decoding overhead
	($<0.01$\,ms) is negligible.
	
	\noindent\textbf{Zero‑knowledge proof (Circom + snarkjs).}
	The Groth16 circuit for $\mathcal{R}_{\mathsf{buy}}$ is written in
	Circom~2.1.8 and compiled with \texttt{snarkjs}~0.7.6 on Node.js~v22.
	Proving and verification times are measured separately for every
	configuration (3 runs each; average and standard deviation reported).
	
	\noindent\textbf{Smart contract (Solidity + Hardhat).}
	The on‑chain fair‑exchange contract (Algorithm~4) is implemented in
	Solidity~0.8.20 and deployed on a local Hardhat~2.19.0 fork of Ethereum
	(London).  Gas costs are obtained by submitting real transactions that
	contain a full Groth16 proof generated by the buyer’s ZKP module.
	
	\noindent\textbf{Parameter notation.}
	A configuration is denoted by $(n_b, m, t_b)$ where $n_b = |\mathcal{P}_B|$
	is the seller’s buyer policy size, $m = |\mathcal{A}_B|$ is the buyer’s
	attribute count, and $t_b$ is the threshold.  Unless stated otherwise,
	experiments use the filter with $\varepsilon = 0.1\%$, $m = 5$, and $t_b = 3$.
	
	\subsection{End‑to‑End Feasibility (RQ1)}
	\label{sec:e2e}
	
	Table~\ref{tab:e2e} provides a panoramic view of the protocol’s performance
	for three representative configurations.  The seller’s publish time grows
	strictly linearly with $n_b$ at a rate of $17.3$\,ms per attribute; this
	cost stems from the $n_b$ pairings and exponentiations required by
	Algorithm~1 and is detailed in Table~\ref{tab:e2e} (ranging from
	$176.8$\,ms at $n_b{=}10$ to $8760.5$\,ms at $n_b{=}500$).  Even for
	$n_b=500$, publish completes in $8.76$\,s; as a one‑time offline operation,
	this is entirely acceptable.
	
	The buyer’s online phase consists of token reconstruction followed by ZKP
	generation.  For the typical parameter set $(200,20,5)$, reconstruction
	takes $8.29$\,s and proof generation $0.60$\,s, yielding a total online
	latency of roughly $8.89$\,s—well within the tolerance of an interactive
	data marketplace.  The smart‑contract gas consumption is dominated by the
	\texttt{buy} transaction, which verifies a Groth16 proof via the EIP‑197
	precompile.  At approximately $28.60$\,M gas with a variance of only
	$\pm 500$ gas across all configurations, the cost fits comfortably inside
	the Ethereum block limit ($30$\,M gas) and is economically justified for
	high‑value trades where privacy and fairness are critical.
	
	The table also reports atomic operation times on two curves.  The pairing
	on BN254 costs $15.3$\,ms, whereas on the symmetric SS512 curve it is only
	$0.57$\,ms.  Conversely, exponentiation in $\mathbb{G}$ is $0.57$\,ms on
	BN254 and $0.71$\,ms on SS512, and hashing to $\mathbb{G}_1$ takes
	$0.053$\,ms versus $1.646$\,ms.  Regardless of the curve choice, the
	qualitative conclusions remain unchanged: pairing count is the primary
	driver, and PriME‑Deal scales linearly with $n_b$ on both curves.
	
	\begin{table}[htbp]
		\centering
		\caption{End‑to‑end performance overview (BN254).}
		\label{tab:e2e}
		\begin{tabular}{l c c c}
			\toprule
			& \multicolumn{3}{c}{\textbf{Configuration} $(n_b, m, t_b)$} \\
			\cmidrule(lr){2-4}
			\textbf{Metric} & $(50,5,3)$ & $(200,20,5)$ & $(500,20,5)$ \\
			\midrule
			Seller Publish (ms)       & 868.30 & 3458.38 & 8760.49 \\
			Buyer Reconstruct (ms)    & 1126.92 & 8289.82 & 21672.46 \\
			ZKP Prove (ms)            & 595 $\pm$ 15 & 595 $\pm$ 15 & 620 $\pm$ 5 \\
			ZKP Verify off‑chain (ms) & 540 $\pm$ 10 & 540 $\pm$ 10 & 546 $\pm$ 5 \\
			\textbf{Buyer total (s)}  & \textbf{1.72} & \textbf{8.89} & \textbf{22.29} \\
			\midrule
			Gas \texttt{list}         & 111,785 & 111,785 & 111,785 \\
			Gas \texttt{buy} (real proof) & 28,601,594 & 28,601,594 & 28,601,594 \\
			\midrule
			\multicolumn{4}{l}{\textbf{Atomic operation times (ms)}} \\
			\quad $e(\mathbb{G},\mathbb{G}_1)$ & \multicolumn{3}{c}{BN254: 15.307 \ \ SS512: 0.568} \\
			\quad $\mathbb{G}$ exp.   & \multicolumn{3}{c}{BN254: 0.574 \ \ SS512: 0.709} \\
			\quad $H_1$ (hash to $\mathbb{G}_1$) & \multicolumn{3}{c}{BN254: 0.053 \ \ SS512: 1.646} \\
			\bottomrule
		\end{tabular}
	\end{table}
	
	\noindent\textbf{Why ZKP overhead is constant.}
	Table~\ref{tab:zkp-stability} lists the Groth16 proving and verification
	times for five representative configurations that span two orders of
	magnitude in attribute sizes.  Despite the dramatic change in $n_b$ and
	$m$, the proving time remains within a narrow $595 \pm 15$\,ms band, and
	the verification time stays at $540 \pm 10$\,ms.  Aggregating over all
	benchmarked configurations (more than 30 parameter combinations), the
	global means are $594.6$\,ms (prove) and $539.1$\,ms (verify) with
	standard deviations of $12.7$\,ms and $8.3$\,ms, respectively.
	
	\begin{table}[htbp]
		\centering
		\caption{Stability of ZKP overhead across diverse attribute scales.}
		\label{tab:zkp-stability}
		\begin{tabular}{c c c c c c}
			\toprule
			$n_b$ & $m$ & $t_b$ & Prove (ms) & Verify (ms) & Gas \texttt{buy} \\
			\midrule
			10  & 5   & 3  & 583.23 & 534.45 & 28,601,594 \\
			50  & 5   & 3  & 662.95 & 612.20 & 28,601,594 \\
			200 & 20  & 5  & 594.87 & 530.90 & 28,601,594 \\
			500 & 20  & 5  & 593.11 & 541.61 & 28,601,594 \\
			500 & 50  & 5  & 619.53 & 545.77 & 28,601,594 \\
			\bottomrule
		\end{tabular}
	\end{table}
	
	This invariance stems from the circuit design: the buyer always proves
	knowledge of exactly $t_b$ shares, one interpolation, and two signature
	checks.  Thus the constraint count depends only on $t_b$ ($\approx 175$k
	for $t_b{=}5$), and larger witnesses merely add unconstrained private inputs.
	The on‑chain cost is also constant because the EIP‑197 precompile performs
	a fixed number of pairing checks irrespective of the witness size.
	
	\subsection{Attribute Scalability and Filter Effectiveness (RQ2)}
	\label{sec:attr-scale}
	
	The buyer’s token reconstruction is the only online phase whose cost depends
	on the attribute sets.  We therefore isolate this phase and first determine a
	safe false‑positive rate for the Bloom filter, then analyse how $n_b$, $m$,
	and $t_b$ influence the execution time under this chosen rate.
	
	\subsubsection{False‑positive analysis}
	\label{sec:fp-analysis}
	
	The Bloom filter is configured for a target false‑positive probability
	$\varepsilon = 0.1\%$, achieved by setting the bit‑array length to
	$S \approx c_{\varepsilon} \cdot n_b$ and using $k \approx \log_2(1/\varepsilon)$
	hash functions, where $c_{\varepsilon}$ is a constant depending only on
	$\varepsilon$.  In our implementation this choice guarantees
	$\varepsilon \le 0.1\%$ for every policy size $n_b$.
	
	We validate the design by injecting $200$ non‑policy attributes into the
	buyer’s set under three scales ($n_b \in \{50,200,500\}$) and four
	false‑positive rates ($0.1\%$, $0.5\%$, $1.0\%$, $5.0\%$); each
	configuration is repeated $10$ times.  The complete dataset is provided in
	Appendix~\ref{app:fp}.  At the target $\varepsilon = 0.1\%$, \emph{zero}
	false positives occur across all scales, and even under the artificially
	high $5.0\%$ rate the observed false‑positive counts remain modest (at most
	$3$) and no token recovery ever fails.  False‑positive attributes cannot
	possess a valid tag $H_4(K_i)$, so the tag‑based matching in
	Algorithm~\ref{alg:buy} immediately discards them; in the extremely rare
	event of a tag collision, the subsequent Lagrange interpolation and
	signature verification provide an additional safeguard.  Hence the filter
	introduces no correctness risk and its overhead is negligible, fully
	decoupling the buyer’s workload from $m$.
	
	\subsubsection{Impact of policy size $n_b$}
	\label{sec:nb-impact}
	
	Figure~\ref{fig:buyer-nb}(a) fixes a small buyer attribute set ($m{=}5$, $t_b{=}3$) and
	varies $n_b$ from $10$ to $50$.  With the filter enabled, reconstruction time grows from
	$271$\,ms to $1127$\,ms; without the filter, it rises from $660$\,ms to $2801$\,ms.
	Both curves are linear, but the filter reduces the slope by approximately $60\%$.
	The mechanism is straightforward: without the filter, every buyer attribute is tested
	against all $n_b$ ciphertexts, requiring $O(m \cdot n_b)$ pairings; with the filter, the
	candidate set is pruned to roughly $t_b$ attributes, reducing the work to $O(t_b \cdot n_b)$.
	Figure~\ref{fig:buyer-nb}(b) extends $n_b$ to $500$ with a larger buyer profile
	($m{=}20$, $t_b{=}5$) and with the filter enabled.  The time increases linearly to
	$21.7$\,s at $n_b{=}500$, matching $t_b \cdot n_b$ pairings.  The unfiltered case is not
	plotted because its cost, dominated by the constant $m \cdot n_b$ term, already exceeds
	$50$\,s even for moderate $n_b$ and is independent of $n_b$ (cf.\ the $n_b{=}200$,
	$m{=}20$ unfiltered data in Section~\ref{sec:tb-impact}, which stabilises around
	$51.69\pm4.39$\,s).  Thus, the filter brings a decisive advantage for large policy sizes.
	
	\begin{figure}[htbp]
		\centering
		\includegraphics[width=\linewidth]{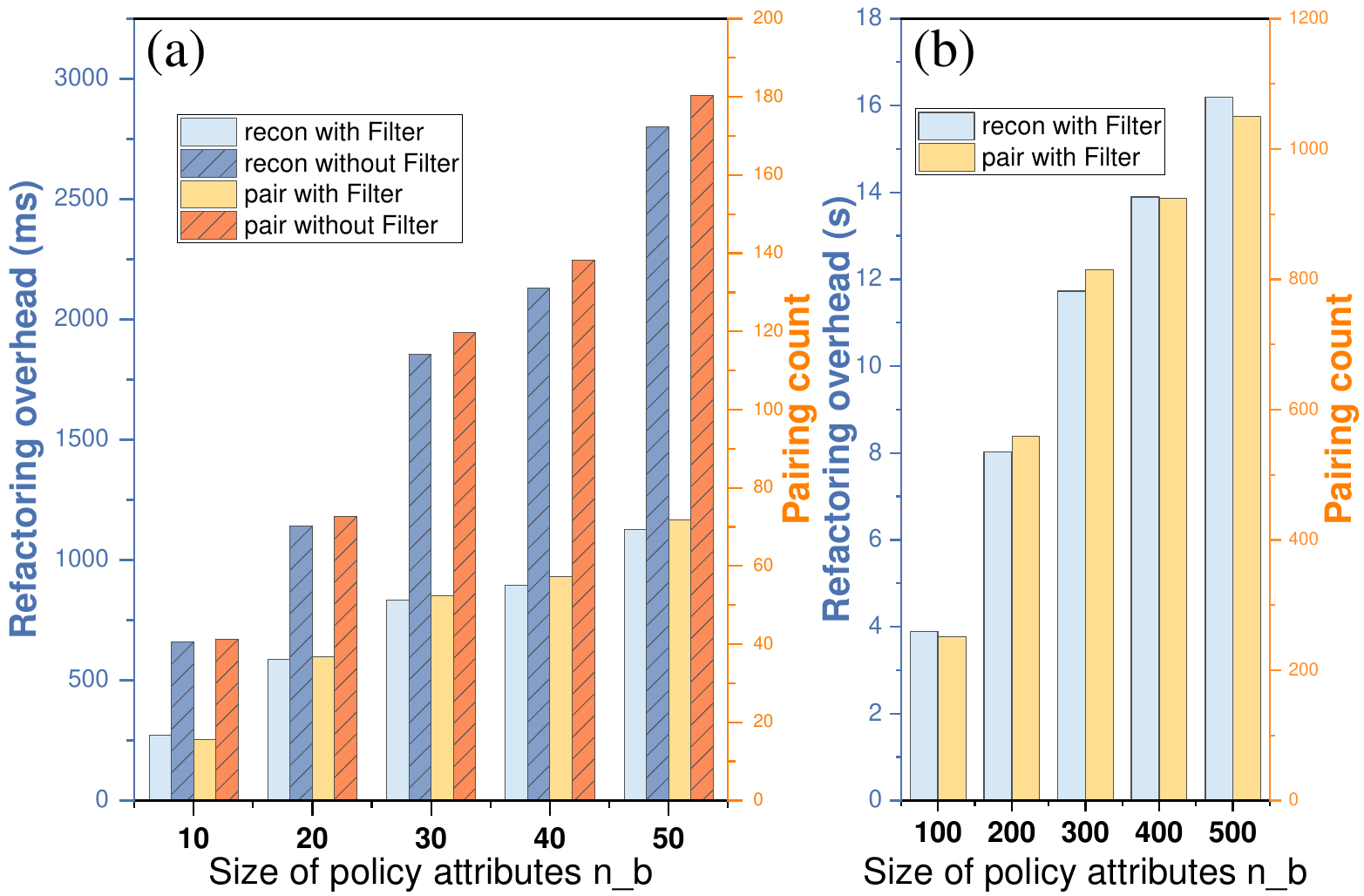}
		\caption{Impact of policy size $n_b$.
			(a) Small scale $(m{=}5, t_b{=}3)$, filter on vs.\ off.
			(b) Large scale $(m{=}20, t_b{=}5)$, filter on only.}
		\label{fig:buyer-nb}
	\end{figure}
	
	\subsubsection{Impact of buyer’s attribute count $m$}
	\label{sec:m-impact}
	
	Figure~\ref{fig:buyer-m}(a) varies $m$ from $3$ to $10$ with $n_b{=}20$, $t_b{=}3$.
	Without the filter, time grows linearly with $m$ (e.g., $2.6$\,s at $m{=}10$), because
	every additional attribute adds $n_b$ pairing attempts.  With the filter, the time stays
	almost flat (around $380$–$640$\,ms), as the effective candidate set remains $t_b{=}3$.
	Figure~\ref{fig:buyer-m}(b) scales $m$ up to $50$ with $n_b{=}200$, $t_b{=}5$ and filter
	enabled.  The time hovers around $5.8$–$8.6$\,s with no upward trend, further confirming
	that the filter successfully decouples the buyer’s workload from the size of its own
	attribute set.  Without the filter, the cost for $n_b{=}200$ would be on the order of
	$m \cdot n_b$ pairings, making large $m$ prohibitive.
	
	\begin{figure}[htbp]
		\centering
		\includegraphics[width=\linewidth]{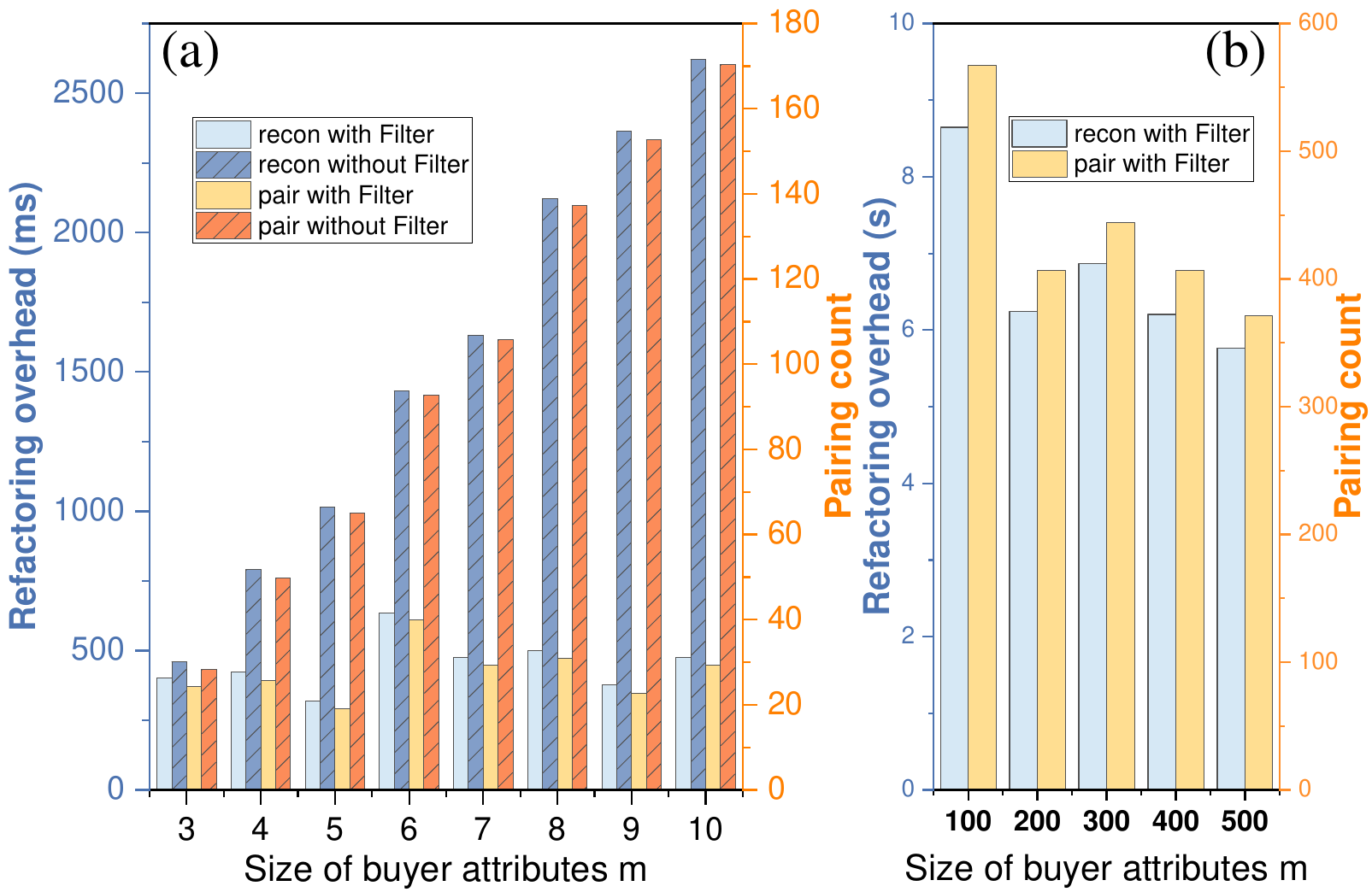}
		\caption{Impact of buyer attribute count $m$.
			(a) Small scale $(n_b{=}20, t_b{=}3)$, filter on vs.\ off.
			(b) Large scale $(n_b{=}200, t_b{=}5)$, filter on only.}
		\label{fig:buyer-m}
	\end{figure}
	
	\subsubsection{Impact of threshold $t_b$}
	\label{sec:tb-impact}
	
	Figure~\ref{fig:threshold}(a) studies $(n_b{=}20, m{=}5)$.  With the filter enabled,
	time increases from $288$\,ms ($t_b{=}2$) to $810$\,ms ($t_b{=}5$), consistent with the
	$t_b \cdot n_b$ pairing count.  Without the filter, the trend reverses: time drops from
	$1314$\,ms to $824$\,ms as $t_b$ grows.  This counter‑intuitive behaviour occurs because,
	for a small $n_b$, the exhaustive scans over non‑intersecting attributes dominate; when
	$t_b$ increases, fewer attributes lie outside the intersection, reducing the number of
	full scans.
	
	Figure~\ref{fig:threshold}(b) moves to the large scale $(n_b{=}200, m{=}20)$.  The
	filtered time rises linearly from $3.7$\,s ($t_b{=}2$) to $15.8$\,s ($t_b{=}10$).
	The unfiltered time stays approximately constant at $51.69 \pm 4.39$\,s (average over
	$t_b=2,\dots,10$), reflecting the dominant $m \cdot n_b$ term that dwarfs any $t_b$
	dependence.  Because the unfiltered cost is both huge and flat, we omit it from the plot
	and report only the filtered curve for clarity.
	
	In all scenarios, the filter not only accelerates reconstruction but also makes its
	cost predictable and dependent solely on $t_b$ (and $n_b$), eliminating the hidden
	overhead from large $m$.  This property is essential for practical deployment.
	
	\begin{figure}[htbp]
		\centering
		\includegraphics[width=\linewidth]{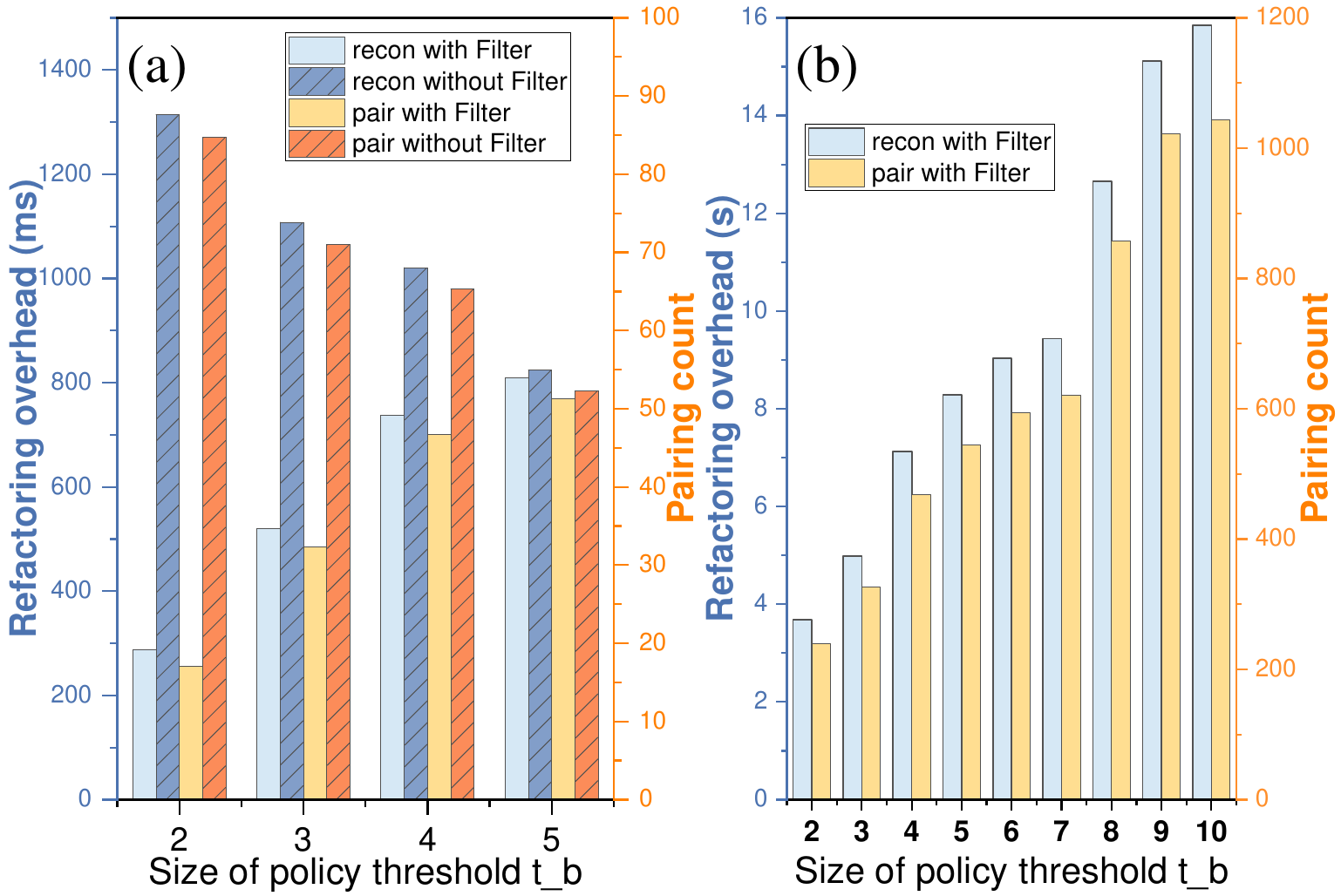}
		\caption{Impact of threshold $t_b$.
			(a) Small scale $(n_b{=}20, m{=}5)$, filter on vs.\ off.
			(b) Large scale $(n_b{=}200, m{=}20)$, filter on only.}
		\label{fig:threshold}
	\end{figure}
	
	\subsection{Comparison with Prior Work (RQ3)}
	\label{sec:comparison}
	
	We compare PriME‑Deal with three state‑of‑the‑art schemes that support
	attribute‑based matchmaking or fuzzy encryption: the basic IB‑ME of
	Ateniese~\emph{et~al.}~\cite{AtenieseFNV21}, the policy‑hiding PSME of
	Sun~\emph{et~al.}~\cite{Sun23PSME}, and the fuzzy IB‑ME of
	Wu~\emph{et~al.}~\cite{Wu2023Fuzzy}, the closest to our work because it
	also supports threshold matching.  We first provide an asymptotic analysis
	of the core off‑chain operations, then present a detailed experimental
	comparison with Fuzzy IB‑ME~\cite{Wu2023Fuzzy}.
	
	\subsubsection{Asymptotic complexity analysis}
	
	Table~\ref{tab:asymptotic} breaks down the dominant cryptographic operations
	for one execution of the seller‑side (Publish/Encrypt) and buyer‑side
	(Reconstruct/Decrypt) algorithms.  We count pairings ($P$), exponentiations
	in $\mathbb{G}$ or $\mathbb{G}_1$ ($E, E_1$), and hash‑to‑curve
	operations ($H$).  Constant terms are omitted for readability.
	
	The basic IB‑ME~\cite{AtenieseFNV21} and PSME~\cite{Sun23PSME} both incur
	$O(n)$ exponentiations for encryption and $O(1)$ pairings for decryption,
	but they do not support threshold matching.  Fuzzy IB‑ME introduces
	threshold matching via Shamir secret sharing, which raises the encryption
	cost to $O(n^2)$ pairings and $O(n^2)$ exponentiations (because of the
	polynomial evaluations of $T(x)$ and $H(x)$ for every attribute, each
	requiring $O(n)$ field operations and multiple pairings/exponentiations
	per polynomial term).  Decryption in Fuzzy IB‑ME is $O(t^2)$ pairings,
	as it must enumerate and combine at least $t$ shares using Lagrange
	coefficients.
	
	PriME‑Deal achieves the same threshold functionality with $O(n_b)$ pairings
	for the seller and $O(t_b\cdot n_b)$ pairings for the buyer.  Crucially,
	the buyer’s cost is independent of $m$ thanks to the Cuckoo filter,
	and neither side requires $O(n^2)$ operations.  Moreover, PriME‑Deal is the
	only scheme that provides bilateral attribute privacy and an auditable
	fair‑exchange mechanism.
	
	\begin{table}[htbp]
		\centering
		\caption{Asymptotic off‑chain complexity of matchmaking schemes.}
		\label{tab:asymptotic}
		\begin{tabular}{l c c}
			\toprule
			\textbf{Scheme} & \textbf{Publish / Encrypt} & \textbf{Reconstruct / Decrypt}\\
			\midrule
			IB‑ME~\cite{AtenieseFNV21}  & $n\,E + n\,E_1$ & $2P + 1E$\\
			PSME~\cite{Sun23PSME}       & $n\,E + n\,H$   & $2P + 1E$\\
			Fuzzy IB‑ME~\cite{Wu2023Fuzzy} & $n^2 P + n^2 E + n^2 E_1$ & $t^2 P$\\
			\textbf{PriME‑Deal}         & $n_b\,P + n_b\,E$ & $t_b\cdot n_b\,P$\\
			\bottomrule
		\end{tabular}
	\end{table}
	
	\subsubsection{Experimental comparison with Fuzzy IB‑ME}
	
	\begin{figure}[htbp]
		\centering
		\includegraphics[width=\linewidth]{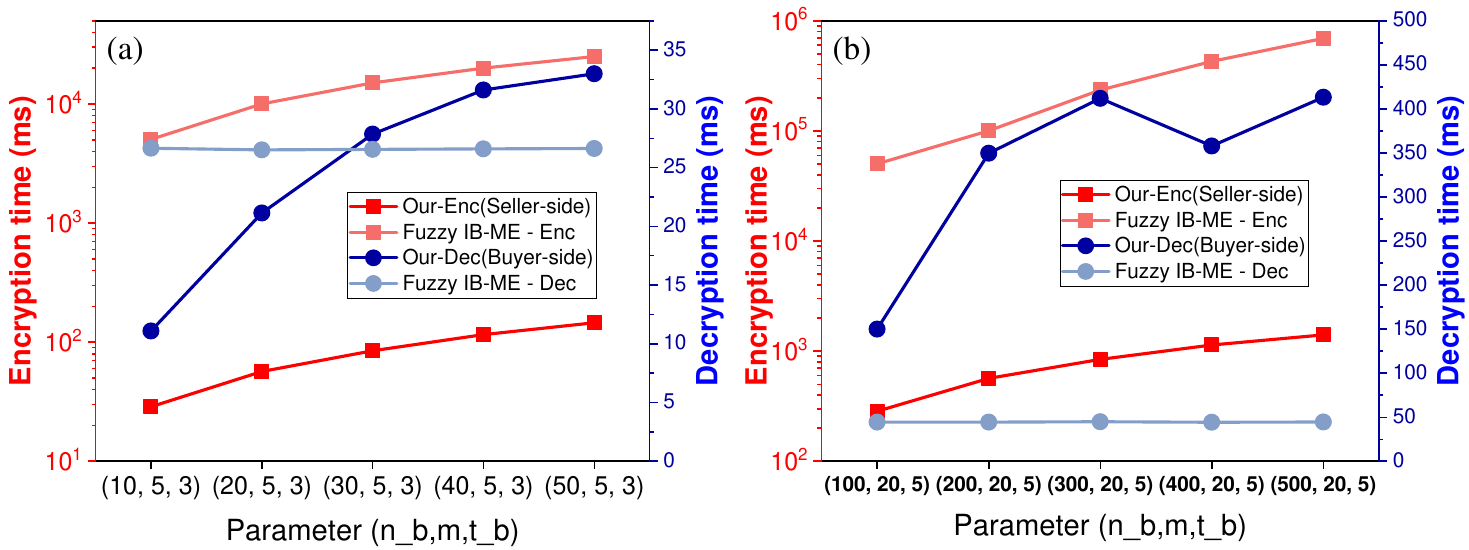}
		\caption{Comparison with Fuzzy IB‑ME.
			(a) Small scale $(m{=}5, t_b{=}3)$.
			(b) Large scale $(m{=}20, t_b{=}5)$.}
		\label{fig:comp-nb}
	\end{figure}
	
	Because Fuzzy IB‑ME is the only prior scheme that supports threshold
	matching, we focus the experimental comparison on it.  For a fair
	comparison, both schemes are benchmarked on the same SS512 symmetric curve
	(112‑bit security).  Figure~\ref{fig:comp-nb}(a) plots the publish
	(encryption) time for small policy sizes ($n_b = 10\text{--}50$, $m=5$,
	$t_b=3$), and Figure~\ref{fig:comp-nb}(b) extends the scale to $n_b =
	100\text{--}500$ ($m=20$, $t_b=5$).
	
	Fuzzy IB‑ME’s encryption grows super‑linearly and exceeds $690$\,s at
	$n_b=500$, whereas PriME‑Deal’s publish remains under $1.5$\,s.  This
	two‑orders‑of‑magnitude gap originates from the $O(n^2)$ polynomial
	evaluations in Fuzzy IB‑ME: for every attribute in the policy and every
	attribute in the seller’s set, the algorithm must evaluate the polynomials
	$T(x)$ and $H(x)$, each of degree $n$, requiring $O(n)$ exponentiations
	per evaluation, leading to $O(n^2)$ pairings and exponentiations overall.
	In contrast, PriME‑Deal’s seller only computes one pairing and one
	exponentiation per policy attribute, yielding linear $O(n_b)$ complexity.
	
	On the buyer side, Fuzzy IB‑ME achieves a near‑constant decryption time
	(${\approx}44$\,ms) independent of $n_b$, $m$, and $t_b$, because the
	threshold matching is algebraically embedded in the ciphertext structure:
	the decryptor directly combines $t$ pre‑computed components without
	iterating over the full attribute set.  PriME‑Deal’s reconstruction time
	grows linearly with $n_b$ (from $11$\,ms to $413$\,ms across the two
	scales), as the buyer must pair each of its $t_b$ candidate attributes
	with up to $n_b$ ciphertexts to locate the correct tags.  However, this
	linear growth is predictable and remains well within interactive bounds
	even for $n_b=500$.  Moreover, PriME‑Deal’s buyer does not require any
	external hint about which subset of attributes to use---the tag‑based
	matching automatically identifies the correct shares---and the protocol
	additionally hides both the seller’s and the buyer’s attribute sets,
	which Fuzzy IB‑ME does not.

	\noindent
	In summary, PriME‑Deal achieves the best trade‑off between off‑chain
	efficiency and functionality: it is orders of magnitude faster than Fuzzy
	IB‑ME on the seller side, keeps the buyer online time within practical
	bounds, and is the only scheme that simultaneously offers threshold
	matching, bilateral attribute privacy, and auditable fair exchange.

	\section{Conclusion and Future Work}
	
	We presented PriME-Deal, a non‑interactive protocol that achieves
	policy‑hiding bilateral attribute matching, linear‑time access control, and
	auditable fair exchange on a public blockchain.  The core idea is to decouple
	attribute matching from file decryption by inserting a secret token: the
	seller publishes masked shares in a single oblivious key‑value store, and the
	buyer reconstructs the token entirely off‑line, then proves compliance in
	zero‑knowledge.  Fairness is enforced through a collateralized on‑chain reveal
	and a Merkle‑based audit that penalizes misbehavior without trusted parties.
	
	Experiments on BN254 show that the seller’s publish time scales linearly
	($3.5$\,s for $n_b{=}200$, $8.76$\,s for $n_b{=}500$), achieving a
	\textbf{two orders of magnitude speedup} over the state‑of‑the‑art threshold
	fuzzy IB‑ME (which requires $690$\,s for $500$ attributes).  The buyer’s
	online latency is under $9$\,s for typical configurations, the Groth16 proof
	generation is constant at $0.6$\,s, and the Cuckoo filter makes the buyer’s
	workload independent of her attribute set size.  On‑chain verification
	consumes $28.6$\,M gas, well within Ethereum’s block limit.
	
	\textbf{Future Work.}
	Several directions merit further investigation. First, for very large policies ($n_b > 500$), integrating list‑decodable Shamir schemes or batched pairing techniques could reduce the buyer's reconstruction latency.  Second, the current trusted CA can be decentralized via threshold BLS or multi‑authority ABE to strengthen the trust model.  Third, migrating the pairing‑based components to post‑quantum assumptions would extend the protocol's applicability to long‑term security.  Finally, optimizing the ZKP circuit for larger thresholds ($t_b > 5$) would enable more fine‑grained access policies without sacrificing constant proof size.

	\clearpage
	
	\bibliographystyle{IEEEtran}
	\bibliography{IEEEabrv,ref}

	\appendices
	
		\section{Compact Protocol Specification of PriME‑Deal}
	\label{app:PriME‑Deal}
	
	Figure~\ref{fig:prime-deal-full} shows the compact protocol specification of PriME-Deal.
	
	\begin{figure*}
		\centering
		\makebox[\textwidth]{
			\fbox{
				\parbox{\dimexpr\textwidth-2\fboxsep-2\fboxrule}
				{
					\underline{\textit{Phase 0: System Setup \& Key Generation (CA, off-chain)}}
					\begin{itemize}
						\item[0a] Generate bilinear group, $\mathsf{mpk}=g^s$ ($\mathsf{msk}=s$), RO hashes $H_1..H_4$, PRF, OKVS row function, and Groth16 CRS.
						\item[0b] Seller: $\mathsf{sk}_S = s\!\cdot\!H_3(\mathsf{uid}_S)$, $\mathsf{pk}_S=g^{\mathsf{sk}_S}$, $\mathsf{cred}_{\mathcal{A}_S}=H_1(\mathsf{uid}_S\|\mathcal{A}_S)^s$.
						\item[0c] Buyer: $\mathsf{sk}_B = s\!\cdot\!H_3(\mathsf{uid}_B)$, $\mathsf{pk}_B=g^{\mathsf{sk}_B}$, $\mathsf{cred}_{\mathcal{A}_B}=H_1(\mathsf{uid}_B\|\mathcal{A}_B)^s$, and $\mathsf{DK}_B=\{D_a=H_1(a)^s\}_{a\in\mathcal{A}_B}$.
					\end{itemize}
					
					\underline{\textit{Phase 1: Publish (Seller $\mathcal{S}$, off-chain)}}
					\begin{itemize}
						\item[(1)] Pick random token $\tau$, $\rho$, $k_2$; set $C_0 = g^\rho$,
						$k_1 = H_2(e(H_1(\tau), C_0))$, $k = k_1 \oplus k_2$.
						\item[(2)] Split $F$ into $m$ blocks. For $i=1..m$: $\kappa_i = H_4(k\|i)$, $c_i = \mathsf{AES.Enc}(\kappa_i, F_i)$, $h_i = H_4(c_i)$. 
						$C_F = c_1\|\dots\|c_m$. Build Merkle tree on $\{h_i\}$, obtain root $\mathsf{root}$, sign $\sigma_{\mathsf{root}} = H_1(\mathsf{root})^{\mathsf{sk}_S}$.
						\item[(3)] $(t_b,n_b)$-Shamir sharing of $\tau$ over $\mathcal{P}_B$: choose $f(x)$ of degree $t_b{-}1$, $f(0)=\tau$; $x_i = H_3(p_i)$, $s_i = f(x_i)$.
						\item[(4)] Double-mask shares: pick seeds $\mathsf{s}_\nu,\mathsf{s}_r$; for each $p_i$, $r_i = \mathsf{PRF}(\mathsf{s}_r, p_i)$, $C_i^* = g^{r_i}$, $K_i = H_2(e(H_1(p_i), \mathsf{mpk})^{r_i})$, $\mathsf{tag}_i = H_4(K_i)$, $\nu_i = \mathsf{PRF}(\mathsf{s}_\nu, p_i)$, $y_i = s_i + \nu_i + H_3(K_i) \bmod p$.
						\item[(5)] Encode into OKVS: $D \gets \mathsf{OKVS.Encode}(\{(p_i, y_i)\})$, and publish ciphertext set $\mathcal{C} = \{(C_i^*,\mathsf{tag}_i)\}_{i=1}^{n_b}$.
						\item[(6)] Build Cuckoo filter $\mathsf{CF}_{\mathsf{hint}}$ containing $\{H_3(p_i)\}$.
						\item[(7)] Sign: $S_0 = H_1(\mathsf{sid}\|\tau)^{\mathsf{sk}_S}$,
						$S_1 = H_1(\mathsf{info})^{\mathsf{sk}_S}$.
						\item[(8)] Publish $\mathsf{aux} = (\mathsf{info}, S_1)$, where
						\[
						\mathsf{info} = (\mathsf{sid}, \mathsf{IC}_S, C_F, \mathsf{s}_\nu, C_0, \mathcal{C}, S_0, D, \mathsf{CF}_{\mathsf{hint}}, \mathsf{com}_{\mathcal{P}_B}, t_b),
						\]
						$\mathsf{com}_{\mathcal{P}_B} = H_4(\mathcal{P}_B \| H_4(\tau))$.
						\item[(9)] Deposit to $\mathcal{SC}$:
						$(\mathsf{sid}, h_D=H_4(D), \mathsf{com}_{\mathcal{P}_B}, t_b, S_0, h_{k_2}=H_4(k_2), \mathsf{root}, \sigma_{\mathsf{root}}, v, d)$,
						with collateral $d=\lceil\log{n_b}\rceil\cdot v$.
					\end{itemize}
					
					\underline{\textit{Phase 2: Bilateral Verification, Token Recovery \& Proof (Buyer $\mathcal{B}$, off-chain)}}
					\begin{itemize}
						\item[(10)] Retrieve $\mathsf{aux}$; verify $e(S_1,g)\stackrel{?}{=}e(H_1(\mathsf{info}),\mathsf{pk}_S)$, verify $\mathsf{cred}_{\mathcal{A}_S}$, and check $|\mathcal{P}_S\cap\mathcal{A}_S|\ge t_s$. Abort if any fails.
						\item[(11)] Pre-filter attributes: $\mathcal{A}' = \{a\in\mathcal{A}_B \mid \mathsf{CF}_{\mathsf{hint}}.\mathsf{lookup}(H_3(a))=\text{true}\}$.
						\item[(12)] For each $a\in\mathcal{A}'$: $z_a = \langle\mathsf{row}(a), D\rangle$, $\nu_a = \mathsf{PRF}(\mathsf{s}_\nu, a)$.
						For each $(C_i^*,\mathsf{tag}_i)\in\mathcal{C}$: compute $K_i' = H_2(e(C_i^*, D_a))$,
						\textbf{if $H_4(K_i') = \mathsf{tag}_i$}, collect $u = z_a - \nu_a - H_3(K_i') \bmod p$.
						\item[(13)] Reconstruct token: enumerate all $t_b$-subsets of collected $(H_3(a), u)$ pairs; use Lagrange interpolation to obtain $\tau'$. Abort if none valid.
						\item[(14)] Authenticate token: check $e(S_0,g) \stackrel{?}{=} e(H_1(\mathsf{sid}\|\tau'), \mathsf{pk}_S)$.
						\item[(15)] Generate zk-SNARK $\pi$ for $\mathcal{R}_{\mathsf{buy}}$ (Alg.~\ref{alg:buyer-circuit}).\\
						\emph{Public input} $x = (\mathsf{sid}, \mathsf{com}_{\mathcal{A}_B}, h_D, \mathsf{s}_\nu, t_b, S_0, C_0, \mathsf{mpk}, \mathsf{pk}_S, \mathsf{uid}_S)$;\\
						\emph{Witness} $w$ contains $(\mathcal{A}_B, \mathsf{cred}_{\mathcal{A}_B}, \mathsf{sk}_{\mathsf{uid}_B}, \tau', \mathcal{U}_0, D)$.
						\item[(16)] Submit $\mathsf{tx} = (\text{``buy''}, \mathsf{sid}, \mathsf{com}_{\mathcal{A}_B}, \pi, v)$ to $\mathcal{SC}$.
					\end{itemize}
					
					\underline{\textit{Phase 3: On-Chain Optimistic Fair Exchange (Smart Contract)}}
					\begin{itemize}
						\item[(17)] $\mathcal{SC}$ verifies $\pi$ using public inputs from its storage. If valid, freeze buyer's $v$ and start challenge period $\Delta T$.
						\item[(18)] \textbf{During} $\Delta T$, seller may call $\texttt{reveal}(\mathsf{sid}, k_2')$:
						if $H_4(k_2')=h_{k_2}$, store $k_2'$, transfer $v$ to seller, return $d$ to seller, emit $\mathsf{KeyReleased}$, and start dispute window $\Delta T_{\mathsf{disp}}$.\\
						\noindent \textbf{Audit} \textbf{During} $\Delta T_{\mathsf{disp}}$, if decryption fails, buyer calls $\texttt{challenge}$:
						\begin{itemize}
							\item $\mathcal{SC}$ selects a random set $\mathcal{I}$ of block indices (Sec.~\ref{sec:audit-evidence}).
							\item Buyer submits $\{c_i, b_i, \pi_i\}_{i\in\mathcal{I}}$, where $\pi_i$ is a Merkle proof for $h_i=H_4(c_i)$.
							\item $\mathcal{SC}$ verifies: (i) each Merkle proof against $\mathsf{root}$ and $\sigma_{\mathsf{root}}$; (ii) $\mathsf{AES.Enc}(H_4(k\|i), b_i) = c_i$.
							\item If all checks pass, seller is honest, seller gets $v$, collateral returned. Else, seller cheated: buyer gets $v+d$, listing closed.
						\end{itemize}
						\item[(19)] If $\Delta T$ expires with no valid $\texttt{reveal}$, buyer calls $\texttt{reclaim}()$ to get back $v$ and receives seller's collateral $d$.
					\end{itemize}
					
					\underline{\textit{Phase 4: File Decryption \& Audit Trigger (Buyer $\mathcal{B}$, off-chain)}}
					\begin{itemize}
						\item[(20)] Upon $\mathsf{KeyReleased}$, get $k_2$; compute $k_1' = H_2(e(H_1(\tau'), C_0))$, recover file key $k = k_1' \oplus k_2$.
						\item[(21)] Decrypt: $F \gets \mathsf{AES.Dec}(k, C_F)$. Authenticated encryption guarantees integrity.
						\item[(21b)] If decryption fails, prepare audit evidence as per Sec.~\ref{sec:audit-evidence} and call $\texttt{challenge}$ (Phase~3 (18)).
					\end{itemize}
					
					\vspace{0.3cm}
					\noindent\textbf{Correctness and Security Highlights.}
					For $a \in \mathcal{P}_B \cap \mathcal{A}_B$, the OKVS decode plus pairing tag match yields a valid Shamir share; any $t_b$ overlapping attributes reconstruct $\tau$. 
					The zero-knowledge proof enforces that the buyer indeed owns a certified attribute set satisfying the policy, preventing collusion.
					Seller-side policy is enforced locally (step~10).
					The optimistic fair exchange with collateral $d>v$ ensures a rational seller reveals correct $k_2$, while the Merkle-based public audit) guarantees that any cheating seller is detected with overwhelming probability and the buyer is compensated. Thus, bilateral access control and fair exchange are simultaneously achieved.
				}
			}
		}
		\caption{Compact and precise specification of PriME‑Deal (aligned with Algorithms~\ref{alg:publish}--\ref{alg:buyer-circuit}).}
		\label{fig:prime-deal-full}
	\end{figure*}

	\section{Detailed False‑Positive Analysis}
	\label{app:fp}
	
	To systematically evaluate the impact of the Bloom filter's false‑positive rate
	$\varepsilon$ on reconstruction performance and correctness, we inject $200$
	non‑policy attributes into the buyer's attribute set and measure the behaviour
	under three different scales.  Each configuration is repeated $10$ times;
	Table~\ref{tab:fp-full} reports the average and standard deviation of the
	reconstruction time and the number of pairing operations, while the
	false‑positive count (FP) is listed as an exact value because it exhibited
	zero variance across runs.  All $120$ trials succeeded in recovering the
	correct token.
	
	\begin{table}[htbp]
		\centering
		\caption{False‑positive experiment (10 runs, mean $\pm$ std).  Full data including 0.5\% error rate.}
		\label{tab:fp-full}
		\resizebox{\linewidth}{!}{%
			\begin{tabular}{c c c c c}
				\toprule
				\makecell[c]{\textbf{Scale}\\ $(n_b,m,t_b)$} & $\varepsilon$ & \textbf{FP} & \textbf{Time (ms)} & \textbf{Pairings} \\
				\midrule
				\multirow{4}{*}{\makecell[c]{Small \\ (50,5,3)}}
				& 0.1\% & $0.0$ & $1199.6\pm296.1$   & $76.8\pm19.4$ \\
				& 0.5\% & $0.0$ & $1183.8\pm323.5$   & $75.8\pm21.3$ \\
				& 1.0\% & $0.0$ & $1332.3\pm325.6$   & $85.5\pm21.4$ \\
				& 5.0\% & $3.0$ & $3451.6\pm417.2$   & $225.0\pm27.6$ \\
				\midrule
				\multirow{4}{*}{\makecell[c]{Medium \\ (200,20,5)}}
				& 0.1\% & $0.0$ & $8845.3\pm1973.5$  & $579.6\pm129.4$ \\
				& 0.5\% & $0.0$ & $7672.3\pm3064.5$  & $502.8\pm202.0$ \\
				& 1.0\% & $0.0$ & $7258.7\pm1674.2$  & $475.8\pm110.5$ \\
				& 5.0\% & $1.0$ & $10084.3\pm2256.3$ & $661.6\pm148.6$ \\
				\midrule
				\multirow{4}{*}{\makecell[c]{Large \\ (500,50,5)}}
				& 0.1\% & $0.0$ & $21255.2\pm4832.9$ & $1395.3\pm318.3$ \\
				& 0.5\% & $0.0$ & $16986.1\pm6977.2$ & $1114.7\pm459.0$ \\
				& 1.0\% & $0.0$ & $18903.6\pm2433.3$ & $1240.0\pm160.1$ \\
				& 5.0\% & $2.0$ & $33059.3\pm5261.4$ & $2171.7\pm345.4$ \\
				\bottomrule
			\end{tabular}%
		}
	\end{table}
	
	\noindent
	Across all scales, at the design target $\varepsilon = 0.1\%$ the observed
	false‑positive count is consistently \textbf{zero}.  Even when
	$\varepsilon$ is deliberately increased to $5\%$, no configuration ever
	suffered a token recovery failure, because false‑positive attributes lack a
	valid tag $H_4(K_i)$ and are therefore discarded before the Lagrange
	interpolation step.  The occasional false positives merely add a modest
	number of extra pairings (roughly $n_b$ per spurious attribute), which
	explains the slightly higher reconstruction times at larger $\varepsilon$.
	This confirms that the Bloom filter can be safely used with the recommended
	$0.1\%$ setting.

	\section{Notation and Full Security Proofs}
	\label{app:full-proofs}

	\subsection{Notation}
	\label{app:table}

	\small
	\renewcommand{\arraystretch}{0.9}
	\noindent
	\begin{tabularx}{0.48\textwidth}{@{} l X @{}}
		\toprule
		\textbf{Symbol} & \textbf{Meaning} \\
		\midrule
		$\mathcal{P}_B, n_b, t_b$ & seller's buyer-side policy, its size, and threshold \\
		$\mathcal{P}_S, t_s$ & buyer's seller-side policy and threshold \\
		$\mathcal{A}_B, \mathcal{A}_S$ & buyer's and seller's attribute sets \\
		$m$ & Number of plaintext blocks / buyer's attribute count (context‑dependent) \\
		$\tau, \rho$ & secret token and randomness for $C_0 = g^\rho$ \\
		$k = k_1 \oplus k_2$ & file key; $k_1$ from token, $k_2$ escrowed on-chain \\
		$\kappa_i$ & block-specific key for file block $i$ (to avoid confusion with $k_1,k_2$) \\
		$c_i, h_i$ & Encrypted block $i$ and its hash \\
		$f(x), s_i$ & Shamir polynomial and share for attribute $p_i$ \\
		$C_i = (C_i^*, \mathsf{tag}_i)$ & ciphertext for attribute $p_i$, with pairing component $C_i^* \in \mathbb{G}$ and match tag $\mathsf{tag}_i$ \\
		$\mathcal{C}$ & Set of ciphertexts $\{C_1, \dots, C_{n_b}\}$ \\
		$D$, $D_a$ & OKVS vector and its attribute decryption key for attribute $a$ \\
		$\mathsf{s}_\nu, \mathsf{s}_r$ & PRF keys for masking and encryption randomness \\
		$\mathsf{CF}_{\mathsf{hint}}$ & Cuckoo filter over $\{H_3(p_i)\}_{i=1}^{n_b}$ \\
		$\mathsf{com}_{\mathcal{A}_B}$ & Commitment to buyer attribute set \\
		$C_F, \mathsf{root}$ & Encrypted file blocks concatenation and Merkle tree root \\
		$v, d$ & payment and collateral ($d = \lceil\log n_b\rceil \cdot v$) \\
		\bottomrule
	\end{tabularx}
	\normalsize
	
	\subsection{UC‑Security – Complete Simulation}
	\label{app:uc}
	
	\begin{theorem}[UC‑Security, detailed]
		\label{thm:uc-weak-full}
		Assume the OKVS is oblivious and its composition with a PRF achieves
		value‑hiding, the signature scheme is existentially unforgeable under
		co‑CDH, the key derivation $H_2(e(H_1(\tau), g^\rho))$ is pseudorandom
		under CBDH, and the zk‑SNARK is zero‑knowledge and knowledge‑sound.
		Then the PriME‑Deal protocol $\Pi_{\mathsf{PD}}$ UC‑realizes
		$\mathcal{F}_{\mathsf{PPFE}}^{\mathsf{weak}}$ in the
		$\mathcal{F}_{\mathsf{ledger}}$-hybrid model against static corruptions
		(the CA is never corrupted).
	\end{theorem}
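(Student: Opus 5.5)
We follow the standard UC template: build a simulator $\mathcal{S}$ for an arbitrary real adversary, then connect the real execution $\mathsf{EXEC}_{\Pi_{\mathsf{PD}},\mathcal{A},\mathcal{Z}}$ to $\mathsf{IDEAL}_{\mathcal{F}_{\mathsf{PPFE}}^{\mathsf{weak}},\mathcal{S},\mathcal{Z}}$ through a sequence of hybrids. Each consecutive pair of hybrids is justified by exactly one hypothesis of Theorem~\ref{thm:uc-weak-full}.

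\textbf{Simulator.} $\mathcal{S}$ runs the CA honestly, since the CA is never corrupted. It programs the random oracles $H_1,\dots,H_4$. It replaces the Groth16 CRS by a simulated CRS with trapdoor $\mathsf{td}$, and keeps the extraction trapdoor as well.
\begin{itemize}[leftmargin=10px]
\item \emph{Honest seller.} On the $\mathsf{Publish}$ leakage $(\mathsf{sid},|\mathcal{A}_S|,|\mathcal{P}_B|,t_b,v,d)$, $\mathcal{S}$ samples $\tau$ itself. It encodes uniform values under $n_b$ dummy keys into $D$, and outputs random tags together with $C_i^*=g^{r_i}$ for fresh $r_i$. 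It computes $S_0$ and $\sigma_{\mathsf{root}}$ with $\mathsf{sk}_S$, which it holds because it plays the CA. It encrypts a dummy file of the correct length under $k=k_1\oplus k_2$, and deposits $h_{k_2}$.
\item \emph{Probes.} $\mathcal{S}$ answers adversarial probes using the functionality's $\mathsf{Probe}$ interface.
\item \emph{Honest buyer.} When $\mathcal{F}$ accepts a $\mathsf{Buy}$, $\mathcal{S}$ posts $\mathsf{com}_{\mathcal{A}_B}$ as a commitment to zero and a proof produced by the ZK simulator.
\item \emph{Corrupted buyer.} On a valid $\pi$ posted by a corrupted buyer, $\mathcal{S}$ runs the knowledge extractor to obtain $(\mathcal{A}_B,\tau',\mathcal{U}_0,\mathsf{sk}_{\mathsf{uid}_B},\dots)$ and forwards $\mathsf{Buy}$ with $\mathcal{A}_B$ to $\mathcal{F}$.
\item \emph{Corrupted seller.} $\mathcal{S}$ reads the deposited data, extracts $\mathcal{P}_B$ from $H_3$/$H_1$ oracle queries, and registers the listing with $\mathcal{F}$.
\item \emph{Reveal, challenge, timeout.} $\mathcal{S}$ mirrors these one-to-one into $\mathsf{KeyReveal}$ and $\mathsf{Timeout}$; the audit outcome maps to the reject branch.
\end{itemize}

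\textbf{Hybrids.}
\begin{enumerate}[leftmargin=*]
\item[$H_0$] The real execution.
\item[$H_1$] Replace the CRS and every honest buyer's proof by simulated ones. This step costs the ZK advantage. The commitment $H_4(\mathcal{A}_B\|\omega_B)$ is then swapped for a random string, which is indistinguishable in the random oracle model because $\omega_B$ is fresh.
\item[$H_2$] Abort if a valid proof from a corrupted buyer yields a witness that fails to satisfy $\mathcal{R}_{\mathsf{buy}}$ with $|\mathcal{P}_B\cap\mathcal{A}_B|\ge t_b$. This step costs the knowledge-soundness error. We must also argue that the extracted witness matches a registered identity, i.e. the credential and $\mathsf{sk}_{\mathsf{uid}_B}$ were issued by the CA. Otherwise the adversary has forged a BLS-style value $H_1(\cdot)^s$, which gives a co-CDH reduction. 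Collusion resistance follows from the single $\mathsf{sk}_{\mathsf{uid}_B}$ constraint.
\item[$H_3$] Abort if any party ever uses a token $\tau'\neq\tau$ that passes the $S_0$ check. This reduces to unforgeability of the signature under co-CDH.
\item[$H_4$] Replace $D$, the values $y_i$, and the tags with an encoding of random values. This is the step we expect to be \emph{the main obstacle}. The corrupted buyers hold keys $D_a$ for $a\in\mathcal{A}_B$, so value-hiding only protects shares for attributes the adversary cannot unmask. We therefore proceed share by share.
\begin{itemize}[leftmargin=10px]
\item For $p_i$ outside every corrupted buyer's set, the mask $H_3(K_i)$ is pseudorandom by CBDH, in a reduction that embeds $g^{r_i}$, $\mathsf{mpk}$ and $H_1(p_i)$.
\item For $p_i$ unmasked by corrupted buyers, there are fewer than $t_b$ such shares for any single identity whose listing $\mathcal{F}$ rejects. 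Perfect Shamir privacy then lets us resample these shares consistently with $\mathcal{F}$'s $\top$-leakage.
\end{itemize}
Obliviousness of the OKVS finally lets us swap the true keys $p_i$ for dummy keys.
\item[$H_5$] Replace $k_1=H_2(e(H_1(\tau),C_0))$ by a random string for listings no corrupted party can open. This relies on the assumed CBDH pseudorandomness. AES-GCM IND-CPA security then lets the file be replaced by a dummy. The resulting hybrid is identical to the ideal execution.
\end{enumerate}

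\textbf{Remaining bookkeeping.} The fairness transitions must produce identical payment distributions in both worlds. For a corrupted seller who reveals a key that hashes correctly but fails decryption, the on-chain audit must reject except with probability $(1-1/m)^\ell$ plus the collision probability of $H_4$ and the INT-CTXT advantage. Mapping this to $\mathcal{F}$'s reject path requires treating audit failure as equivalent to timeout. That equivalence is where the weak functionality's explicit seller-reveal step is used, and we expect to adjust $\mathcal{S}$ there so that $v$ and $d$ are routed as in Algorithm~\ref{alg:contract}.

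Summing the hybrid gaps bounds the environment's distinguishing advantage by a negligible quantity.
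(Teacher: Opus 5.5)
Your skeleton matches the paper's: a ZK-simulated CRS, an OKVS of random values under dummy keys, extraction for corrupted buyers, and mirrored reveal/timeout. You also sample $\tau$ yourself and encrypt a dummy file. That fits the size-only leakage of $\mathcal{F}_{\mathsf{PPFE}}^{\mathsf{weak}}$ better than the paper's simulator, which uses a $\tau$ ``obtained from the ideal functionality'' and the real $F$, neither of which $\mathcal{F}$ sends.

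The first gap is in your hybrids $H_4$ and $H_5$. Hybrid $H_5$ only treats listings ``no corrupted party can open'', and nothing handles the remaining listings. Your per-identity count in hybrid $H_4$ does not make that remainder empty, because $\mathsf{AttrKeyGen}$ issues $D_a=H_1(a)^s$ independently of $\mathsf{uid}$. Corrupted buyers can therefore pool their keys and unmask every share of $\mathcal{P}_B$ in the union of their attribute sets. The single-$\mathsf{sk}_{\mathsf{uid}_B}$ circuit constraint blocks only pooled \emph{proofs}, not offline reconstruction of $\tau$. Such a coalition, or a single authorized corrupted buyer, can compute $k=H_2(e(H_1(\tau),C_0))\oplus k_2$ as soon as any purchase puts $k_2$ on chain, and then decrypt $C_F$. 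The environment, which chose $F$, then sees your dummy. Since $\mathcal{F}$ neither stores nor releases $F$, this case cannot be simulated at all. The paper's proof escapes it only by assuming the adversary never corrupts buyers that \emph{jointly} meet $t_b$; that restriction is not in the theorem statement. You need to adopt this union-based restriction explicitly, after which hybrid $H_5$ covers every listing. Even then, for the fewer than $t_b$ policy attributes the corrupted buyers do hold, the simulated tags and OKVS keys must be the true ones. So your ``random tags'' and the final swap to dummy keys do not apply there.

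Second, hybrid $H_2$ and your corrupted-buyer step rely on an extractor the hypotheses do not supply; the paper makes the same move. Groth16 knowledge soundness gives a non-black-box extractor, not an extraction trapdoor. It also says nothing once the adversary has seen the simulated proofs introduced in hybrid $H_1$. A UC simulator needs straight-line simulation-extractability. This matters even under the joint-threshold restriction, because the statement $x$ does not bind the sender. An unauthorized corrupted buyer can submit an honest buyer's $(\mathsf{com}_{\mathcal{A}_B},\pi)$ for the same $\mathsf{sid}$ (e.g.\ by front-running it), and the contract accepts. Nothing can be extracted, so the abort in hybrid $H_2$ fires with noticeable probability rather than with the knowledge-soundness error. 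You need a simulation-extractable proof whose statement includes the buyer's identity.

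Finally, registering a corrupted seller's listing does not let $\mathcal{S}$ set $\mathcal{F}$'s internally sampled $K$. The adversary's valid reveal of its own $k_2$ therefore fails $H_4(K')=h_K$ in the ideal world. Also, $\mathsf{Timeout}$ is the only ideal branch that routes $v$ and $d$ to the buyer. So $\mathcal{S}$ must emulate the audit exactly and withhold $\mathsf{KeyReveal}$ when the audit convicts, rather than charging $(1-1/m)^\ell$ to the bound. That term is not negligible for a fixed challenge size $\ell$.
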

	
	\begin{proof}
		We construct a simulator $\mathcal{S}$ that internally runs the real‑world
		adversary $\mathcal{A}$ and interacts with the ideal functionality
		$\mathcal{F}_{\mathsf{PPFE}}^{\mathsf{weak}}$.
		
		\medskip\noindent\textbf{Setup.}
		$\mathcal{S}$ generates the public parameters $\mathsf{pp}$ as in the real
		$\mathsf{Setup}$, except that it obtains the common reference string
		$\mathsf{crs}$ from the ZKP simulator.  By zero‑knowledge, the simulated
		$\mathsf{crs}$ is indistinguishable from a real one.
		
		\medskip\noindent\textbf{Key registration.}
		The CA is honest.  For every registration request, $\mathcal{S}$ executes
		the real $\mathsf{AttrKeyGen}$ using the master secret $s$ (which it knows)
		and returns $\mathsf{IC}$ and, for buyers, $\mathsf{DK}$.  The view is
		identical to the real world.
		
		\medskip\noindent\textbf{Honest seller publish.}
		On input $(\mathsf{Publish}, \mathsf{sid}, \mathcal{A}_S, \mathcal{P}_B, t_b,
		F, v, d)$ from the environment, $\mathcal{S}$ sends it to
		$\mathcal{F}_{\mathsf{PPFE}}^{\mathsf{weak}}$.  The functionality generates
		a random $\tau \in \mathbb{F}_p$ and $K \in \{0,1\}^\lambda$, stores the
		listing, and returns $(\mathsf{sid}, \mathsf{seller}, |\mathcal{A}_S|,
		|\mathcal{P}_B|, t_b, v, d)$.  $\mathcal{S}$ must now produce
		$\mathsf{aux}$ and the on‑chain deposit.
		
		\begin{itemize}[leftmargin=10px]
			\item \textbf{Simulating the OKVS.}
			$\mathcal{S}$ picks uniformly random $\tilde{y}_1,\dots,\tilde{y}_{n_b}
			\in \mathbb{F}_p$ and dummy keys $\tilde{p}_i$ unrelated to
			$\mathcal{P}_B$, and sets $D \gets \mathsf{Encode}(\{(\tilde{p}_i,
			\tilde{y}_i)\})$.  By the value‑hiding property of the OKVS+PRF
			composition, this $D$ is indistinguishable from a real encoding.
			Because the adversary is not allowed to corrupt a set of buyers that
			jointly satisfy the threshold $t_b$ (otherwise it could trivially
			distinguish the two policies, contradicting the UC framework's
			corruption rules), it cannot detect the substitution.
			
			\item \textbf{Other fields.}
			$\mathcal{S}$ picks random $\rho \in \mathbb{F}_p$ and $k_2 \in
			\{0,1\}^\lambda$, sets $C_0 = g^\rho$, $k_1 = H_2(e(H_1(\tau), C_0))$,
			$k = k_1 \oplus k_2$, encrypts $C_F \gets \mathsf{AES.Enc}(k, F)$,
			computes $S_0 = H_1(\mathsf{sid}\|\tau)^s$ and $S_1$ on $\mathsf{info}$
			exactly as in Algorithm~\ref{alg:publish}, and generates $\mathcal{C}$
			as random group elements (their real distribution depends only on PRF
			outputs and is thus pseudorandom).  It then deposits
			$(\mathsf{sid}, H_4(D), \mathsf{com}_{\mathcal{P}_B}, t_b, S_0,
			h_{k_2}, v, d)$ into the simulated contract.
		\end{itemize}
		The only differences from a real $\mathsf{aux}$ are in $D$ and
		$\mathcal{C}$, both of which are computationally indistinguishable
		under the stated assumptions.
		
		\medskip\noindent\textbf{Honest buyer operation.}
		The ideal functionality $\mathcal{F}_{\mathsf{PPFE}}^{\mathsf{weak}}$ does
		\emph{not} check the seller‑side policy; this check is performed locally
		by an honest buyer before calling $\mathsf{Buy}$.  Consequently, the
		environment will only instruct an honest buyer to execute $\mathsf{Buy}$
		when $|\mathcal{P}_S \cap \mathcal{A}_S| \ge t_s$ holds.  Hence we only
		need to consider the buyer‑side policy and the ZK proof.
		
		When an honest buyer is activated with $(\mathsf{Buy}, \mathsf{sid},
		\mathcal{P}_S, t_s, \mathcal{A}_B, \pi, v)$, $\mathcal{S}$ forwards it to
		$\mathcal{F}_{\mathsf{PPFE}}^{\mathsf{weak}}$.  The functionality checks
		$|\mathcal{P}_B \cap \mathcal{A}_B| \ge t_b$ and the validity of $\pi$.
		If the check fails, it rejects; otherwise it accepts, freezes $v$, and
		starts the timer (without immediately delivering the file key).
		
		\begin{itemize}[leftmargin=10px]
			\item \textbf{Rejection.}
			If the functionality rejects, $\mathcal{S}$ runs the honest buyer
			algorithm on the previously generated $D$ (which contains only random
			values).  Because the OKVS decodes to random values independent of
			$\tau$, reconstruction fails with overwhelming probability and the
			buyer aborts.  This transcript is indistinguishable from a real failed
			attempt.
			
			\item \textbf{Acceptance.}
			If the functionality accepts, $\mathcal{S}$ constructs a valid witness
			$w$ containing the buyer's attributes, the credential, the reconstructed
			$\tau$ (which $\mathcal{S}$ knows from the publish phase), and a set of
			shares $\mathcal{U}_0$ that satisfy the circuit constraints.  It then
			generates a simulated proof $\pi$ via the ZKP simulator.  By
			zero‑knowledge, $\pi$ is indistinguishable from a real proof.
			$\mathcal{S}$ submits $(\text{``buy''}, \mathsf{sid},
			\mathsf{com}_{\mathcal{A}_B}, \pi, v)$ to the contract, which verifies
			$\pi$ and freezes the payment.
		\end{itemize}
		
		\medskip\noindent\textbf{Key reveal and timeout.}
		$\mathcal{S}$ monitors the real‑world contract.  If the seller invokes
		$\texttt{reveal}$ with $k_2'$ where $H_4(k_2') = h_{k_2}$, $\mathcal{S}$
		instructs $\mathcal{F}_{\mathsf{PPFE}}^{\mathsf{weak}}$ to deliver
		$(\mathsf{KeyReveal}, \mathsf{sid}, k_2')$; the functionality transfers
		$v$ to the seller and sends the derived file key to the buyer.  If no
		valid reveal occurs before the challenge timer expires, $\mathcal{S}$
		triggers $\mathsf{Timeout}$; the functionality returns $v$ and $d$ to
		the buyer.  In both cases the outcomes match the real contract.
		
		\medskip\noindent\textbf{Corrupted parties.}
		For a corrupted seller, $\mathcal{S}$ follows the real protocol with the
		keys provided by $\mathcal{A}$.  For a corrupted buyer, $\mathcal{S}$
		runs the real verification algorithm on the submitted $\pi$.  If $\pi$
		is valid, knowledge‑soundness yields an extractor that outputs a witness
		containing a certified attribute set $\mathcal{A}_B$ and a token $\tau'$
		satisfying $|\mathcal{P}_B \cap \mathcal{A}_B| \ge t_b$ and the seller's
		signature.  $\mathcal{S}$ extracts this witness and supplies it to the
		ideal functionality to pass the buyer‑side policy check.  If the proof
		is invalid, the transaction is rejected.
		
		A standard hybrid argument shows that the real and ideal views are
		computationally indistinguishable.  The only non‑trivial steps are the
		OKVS substitution (secured by value‑hiding) and the ZKP simulation
		(secured by zero‑knowledge).  Hence $\Pi_{\mathsf{PD}}$ UC‑realizes
		$\mathcal{F}_{\mathsf{PPFE}}^{\mathsf{weak}}$.
	\end{proof}
	
	\subsection{Full Proof of Policy Privacy}
	\label{app:priv}
	
	\begin{theorem}[Policy Privacy]
		\label{thm:priv-full}
		If the OKVS+PRF composition is value‑hiding, then
		$\mathsf{Priv}_{\mathcal{A}}^{\mathsf{policy}}(\lambda)$ is negligible
		for any PPT adversary $\mathcal{A}$.
	\end{theorem}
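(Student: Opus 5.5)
The plan is a short sequence of game hops, starting from the real game $\mathsf{Priv}_{\mathcal{A}}^{\mathsf{policy}}(\lambda)$ with bit $b$ and ending in a game whose challenge is independent of $b$. We bound the distance between consecutive games by the value‑hiding property assumed in the statement, together with the random‑oracle and PRF facts of Section~\ref{sec:hash-prf}.

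\textbf{Game 0 and the key‑query restriction.}
$G_0$ is the real game. The first step is to use the key‑query restriction. For every queried buyer attribute set $\mathcal{A}_B$, at least one branch $b'\in\{0,1\}$ has $|\mathcal{P}_{B,b'}\cap\mathcal{A}_B|<t_b$. However, the adversary does not know which branch was chosen. We therefore argue per attribute: for a key $a\notin\mathcal{P}_{B,b}$, the decoded value $\langle\mathsf{row}(a),D\rangle$ is pseudorandom by value‑hiding. For $a\in\mathcal{P}_{B,b}$ with $D_a$ queried, the adversary learns at most a share $s_i$.

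If the challenge policy is not satisfied by any single queried set, Shamir hiding (fewer than $t_b$ shares reveal nothing) keeps $\tau$ uniform. Collusion across identities is blocked by the binding of $D_a$ to one identity (the collusion‑resistance discussion). In the other branch, the restriction itself forbids satisfying both policies simultaneously. I expect the honest proof to need an extra step here: the adversary could win just by testing membership of its own attributes in $\mathcal{P}_{B,b}$ via tags. So I would interpret the restriction as also forbidding queried attributes in the symmetric difference $\mathcal{P}_{B,0}\triangle\mathcal{P}_{B,1}$, consistent with the Discussion's remark that membership testing requires $D_a$.

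\textbf{Game 1: replace the OKVS.}
In $G_1$, the values $y_i$ for attributes whose keys are not held are replaced by uniform field elements. We bound $|\Pr[G_0]-\Pr[G_1]|$ by a reduction $\mathcal{B}$ that embeds its value‑hiding challenge into $D$. $\mathcal{B}$ answers key queries itself, since it runs $\mathsf{Setup}$ and knows $s$; it only lacks the PRF key $\mathsf{s}_r$ for unqueried attributes. The key difficulty in this reduction is that $\mathsf{s}_\nu$ is published in $\mathsf{aux}$, so the PRF mask alone gives nothing. The hiding must instead come from $H_3(K_i)$ with $K_i=H_2(e(H_1(p_i),\mathsf{mpk})^{r_i})$. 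Without $D_a=H_1(p_i)^s$, computing $e(H_1(p_i),g)^{s r_i}$ from $g^{r_i}$, $g^s$ and $H_1(p_i)$ is a CBDH‑type problem in the random‑oracle model. So $H_3(K_i)$ is indistinguishable from random unless $\mathcal{A}$ queries the oracle on the exact pairing value.

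\textbf{Game 2: replace the remaining components.}
In $G_2$, the values $g^{r_i}$ and $\mathsf{tag}_i$ are replaced by random ones, via PRF security of $\mathsf{s}_r$ and the same oracle argument. By obliviousness of the OKVS, $D$ now has a distribution independent of the key set. $\mathsf{CF}_{\mathsf{hint}}$ stores the random‑oracle images $H_3(p_i)$; these are independent of $b$ unless $\mathcal{A}$ queries $H_3$ on policy elements, which it can do since the policies are its own choice. I therefore expect to require the filter to be treated as leakage permitted by equal‑size policies, or to be excluded from the game as a modeling caveat.

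\textbf{Game 3 and conclusion.}
In $G_3$, the Shamir argument makes $\tau$ uniform and independent of everything else the adversary sees. Then $S_0$, $C_0$, $k_1$ and $\mathsf{com}_{\mathcal{P}_B}=H_4(\mathcal{P}_B\|H_4(\tau))$ are fresh random‑oracle outputs on an unknown $\tau$. $C_F$ is encrypted under $k=k_1\oplus k_2$ with $k_2$ uniform, so AES‑GCM IND‑CPA hides $F_b$. Hence $\Pr[G_3]=1/2$, and summing the negligible gaps from each hop gives the theorem.

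The main obstacle is making Game 1 rigorous, because the claimed value‑hiding must actually be derived from CBDH on the $H_3(K_i)$ mask rather than from the public PRF. I would first try to prove value‑hiding as a lemma, reducing to CBDH by programming $H_1(p_i)$ and guessing which oracle query contains the target pairing value.
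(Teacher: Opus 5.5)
You correctly spot two places where the paper's own two-hybrid proof is silently wrong. First, $\mathsf{CF}_{\mathsf{hint}}$ is still generated honestly from $\mathcal{P}_{B,b}$ in the paper's $H_1$, yet that transcript is declared independent of $b$. Second, the stated key-query restriction does not stop tag-based membership tests. In fact it forbids the wrong case: a buyer satisfying exactly one policy reconstructs $\tau$ and validates it against $S_0$ iff $b$ is that branch. Your ``other branch'' sentence is therefore not covered by the restriction; only your symmetric-difference reinterpretation helps there.

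However, the hop you identify as the crux, Game~1, cannot be made rigorous for this scheme. The paper's version of it (``the adversary does not possess $s$'') fails for the same reason. Every user key is $\mathsf{sk}=s\cdot H_3(\mathsf{uid})$ with $H_3$ a public oracle. So a single seller signing-key query, which the game allows without restriction, yields $s=\mathsf{sk}\cdot H_3(\mathsf{uid})^{-1} \bmod p$. With $s$, the adversary computes $H_1(a)^s$ and hence $H_2(e(C_i^*,H_1(a)^s))$ for every candidate $a$. This removes the only mask that survives the publication of $\mathsf{s}_\nu$. Testing the tags for any $a\in\mathcal{P}_{B,0}\setminus\mathcal{P}_{B,1}$ then reveals $b$ with advantage about $1/2$. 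There is no CBDH instance left to embed, since $H_3(K_i)$ becomes computable from public data. Your fix of restricting buyer queries does not touch this.

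Two further steps fail as well. Your claim that collusion is blocked by binding $D_a$ to one identity is false here. $\mathsf{DK}_B=\{H_1(a)^s\}_{a\in\mathcal{A}_B}$ is identity-independent, and the identity binding exists only inside $\mathcal{R}_{\mathsf{buy}}$, which a distinguisher never has to satisfy. Pooling the keys of several buyers, each admissible even under your strengthened restriction, can give $t_b$ attributes in $\mathcal{P}_{B,0}\cap\mathcal{P}_{B,1}$. That reconstructs $\tau$ in either branch, after which $\mathsf{com}_{\mathcal{P}_B}=H_4(\mathcal{P}_B\|H_4(\tau))$ identifies $b$. Game~3's Shamir argument therefore needs a restriction on the union of all queried sets.

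Game~3 also never accounts for $\mathsf{IC}_S$. $\mathcal{A}_{S,b}$ is published in the clear, and even if it were not, $e(\mathsf{cred}_{\mathcal{A}_S},g)=e(H_1(\mathsf{uid}_S\|\mathcal{A}_{S,b'}),\mathsf{mpk})$ tests each candidate. So the game is trivially won whenever $\mathcal{A}_{S,0}\neq\mathcal{A}_{S,1}$.

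Together with the filter leak, this means the statement is false for the game as defined, not merely unproven. Your hops could at best establish a modified theorem, which would need:
\begin{itemize}
\item a key derivation that does not expose $s$, with identity-bound attribute keys;
\item a both-or-neither restriction on the union of queried buyer sets;
\item $\mathcal{A}_{S,0}=\mathcal{A}_{S,1}$;
\item $\mathsf{CF}_{\mathsf{hint}}$ removed or keyed.
\end{itemize}
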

	
	\begin{proof}
		Consider the policy privacy game of Section~\ref{sec:games}.  We define two
		hybrid experiments:
		
		\begin{itemize}[leftmargin=10px]
			\item \textbf{Hybrid $H_0$:} This is the real game.  The challenger uses
			the bit $b$ chosen by $\mathcal{A}$ to generate $\mathsf{aux}$ and the
			on‑chain deposit following $\mathsf{Publish}$ exactly as in the real
			protocol.
			\item \textbf{Hybrid $H_1$:} The challenger replaces the OKVS vector $D$
			with an encoding of $n_b$ uniformly random field elements, independently
			of the token $\tau$ and the policy $\mathcal{P}_{B,b}$.  All other
			components ($\mathsf{CF}$, $S_0$, $C_0$, $\mathcal{C}$, commitments, etc.)
			are generated honestly using the same $\tau$ and $k_2$.
		\end{itemize}
		
		\noindent\textbf{Indistinguishability of $H_0$ and $H_1$.}
		The only difference between the two hybrids is the content of $D$.
		In $H_0$, $D$ encodes the PRF‑masked Shamir shares of $\tau$ under
		$\mathcal{P}_{B,b}$.  In $H_1$, $D$ encodes purely random values.
		The value‑hiding property of the OKVS+PRF composition (Definition~\ref{sec:okvs})
		guarantees that these two distributions are computationally indistinguishable
		for any adversary that does not possess a sufficiently large set of attributes
		to reconstruct $\tau$.  The policy privacy game explicitly forbids the
		adversary from corrupting parties or requesting keys that would enable it to
		satisfy the threshold $t_b$ for both challenge branches simultaneously;
		hence the adversary cannot distinguish the two distributions.  Any
		distinguisher between $H_0$ and $H_1$ can be turned into an adversary against
		value‑hiding, so the distinguishing advantage is negligible.
		Even though the PRF seed $\mathsf{s}_\nu$ is public, the adversary can
		compute $\nu_a = \mathsf{PRF}(\mathsf{s}_\nu, a)$ for any candidate $a$.
		The value stored in the OKVS is $y_i = s_i + \nu_i + H_3(K_i)$.
		Without the attribute decryption key $D_a$, the adversary cannot compute
		$H_3(K_i)$ because the pairing output $e(C_i, D_a)$ is computationally
		hidden under the CBDH assumption (the adversary does not possess $s$).
		Hence the mask $H_3(K_i)$ remains pseudorandom, and the value‑hiding
		property of the OKVS+PRF construction is preserved even with the knowledge
		of $\mathsf{s}_\nu$.
		
		\noindent\textbf{Advantage in $H_1$.}
		In $H_1$, the OKVS $D$ contains no information about $\tau$.  The token
		$\tau$ itself is chosen uniformly at random and is completely independent of
		the policy $\mathcal{P}_{B,b}$.  All remaining elements of $\mathsf{aux}$
		($S_0$, $C_0$, $C_F$, $\mathcal{C}$) are computed from $\tau$ and from
		independent random values ($\rho$, $k_2$, the PRF seeds).  Consequently, the
		entire challenge transcript is statistically independent of the bit $b$.  The
		adversary's probability of guessing $b$ in $H_1$ is therefore exactly $1/2$.
		
		Since $H_0$ and $H_1$ are computationally indistinguishable, the adversary's
		advantage in the real game $\mathsf{Priv}_{\mathcal{A}}^{\mathsf{policy}}$
		is bounded by a negligible function in $\lambda$.
	\end{proof}
	
	\subsection{Full Proof of On-chain Fair Exchange }
	\label{app:fair}
	
	\begin{theorem}[Fair Exchange with On‑Chain Audit]
		\label{thm:fair-full}
		Assume the zk‑SNARK is knowledge‑sound, the hash functions are
		collision‑resistant, the symmetric encryption provides ciphertext
		integrity, and the smart contract correctly executes the challenge‑response
		audit of Algorithm~\ref{alg:contract}.  Then for any PPT adversary
		$\mathcal{A}$ that corrupts either the seller or the buyer,
		$\Pr[\mathcal{A}\text{ wins in }\mathsf{Fair}_{\mathcal{A}}^{\mathsf{exch}}]
		\le \mathsf{negl}(\lambda)$.
	\end{theorem}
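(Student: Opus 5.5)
The plan is a case split on the corruption bit $c$ of $\mathsf{Fair}_{\mathcal{A}}^{\mathsf{exch}}$. In each case I enumerate the terminal states of the contract in Algorithm~\ref{alg:contract}: no valid buy, reclaim after $\Delta T$, reveal followed by no challenge, and reveal followed by a challenge that passes or fails. For each terminal state I check that $E_1$ and $E_2$ agree except on a negligible-probability event. That event will be charged to knowledge-soundness, to collision resistance of $H_4$, or to INT-CTXT of AES-GCM. Winning probability will then be bounded by a union bound over these bad events, plus the audit-detection error $(1-1/m)^{\ell}$ from Section~\ref{sec:audit-evidence}. For that error to be negligible, $\ell$ must be chosen so that it is.

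\textbf{Corrupted buyer ($c=1$).} The honest seller releases $k_2$ only through \texttt{reveal}. This call happens only after the contract has accepted a proof and frozen $v$. The key $k=k_1\oplus k_2$ also depends on $k_2$, which is uniform and known on-chain only as $h_{k_2}=H_4(k_2)$. So without a reveal the buyer would have to invert $H_4$ in the random-oracle model to obtain $k$ (event $E_1\wedge\neg E_2$), and this happens with negligible probability. If a valid proof is accepted, the honest seller reveals the correct $k_2$. The buyer may still call \texttt{challenge}, but the seller's blocks are honestly encrypted and signed under $\mathsf{root}$. Hence an audit can only pass, and $v$ goes to the seller. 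Making the contract conclude ``seller cheated'' would require Merkle evidence for a block $c_i'\neq c_i$ that fails re-encryption. That yields an $H_4$ collision in the Merkle path, or a forgery of $\sigma_{\mathsf{root}}$ under co-CDH. Knowledge-soundness is used here only to argue that an accepted proof corresponds to a genuine authorised buyer, so that $E_1$ is well defined. The atomicity itself rests on the reveal-after-payment ordering.

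\textbf{Corrupted seller ($c=0$).} The honest buyer pays only through the contract. If the seller never reveals, \texttt{reclaim} refunds $v$, so $\neg E_2$; $\neg E_1$ then holds by the same $H_4$ one-wayness argument. If the seller reveals some $k_2'$ with $H_4(k_2')=h_{k_2}$, collision resistance forces $k_2'=k_2$ as committed at publish time. The remaining attack is that the committed $k_2$, the ciphertext $C_F$, or the Merkle root are mutually inconsistent. If AES decryption succeeds, $E_1$ holds and the buyer does not challenge, so $E_2$ holds too. If decryption fails, $\neg E_1$, and the buyer challenges. We must then show that the audit fails with overwhelming probability. INT-CTXT implies that a block that fails to decrypt under $\kappa_i$ cannot equal $\mathsf{AES.Enc}(\kappa_i,b_i)$ for any $b_i$. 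So any challenged bad index makes re-encryption consistency fail, the contract refunds, and $\neg E_2$.

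The main obstacle is this last step. Random sampling of $\mathcal{I}$ only catches a bad block with probability $1-(1-\beta/m)^{\ell}$, where $\beta$ is the number of bad blocks. A seller corrupting a single block escapes with probability about $1-\ell/m$. Bounding this by a negligible function is not possible in general. I would first try to close the gap by interpreting $\mathsf{AES.Dec}(k,C_F)$ blockwise: the buyer knows which blocks fail and can submit those indices, making $\mathcal{I}$ buyer-chosen, which the contract can verify via Merkle proofs. Failing that, I would state the bound as $\mathsf{negl}(\lambda)+(1-1/m)^{\ell}$ and require $\ell=m$ or $\ell=\omega(m\log\lambda)$ for the claim.
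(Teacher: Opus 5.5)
Your case split follows the paper's own proof. For a malicious buyer, the honest seller reveals only after a verified proof has frozen $v$. For a malicious seller, you cover no reveal, a wrong key, and the correct key. The gap is in the corrupted-buyer case.

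\textbf{The corrupted-buyer audit step fails.} You claim that after the honest reveal a \texttt{challenge} ``can only pass'', since a ``seller cheated'' verdict would need Merkle evidence for some $c_i'\neq c_i$. In the audit of Section~\ref{sec:audit-evidence}, however, the buyer supplies more than $c_i$ and $\pi_i$. It also supplies the plaintext $b_i$ and the key used in the re-encryption check (a claimed $k$, or a $k_1$ to be combined with the revealed $k_2$). The contract cannot recompute $k_1=H_2(e(H_1(\tau),C_0))$ because $\tau$ is secret, and nothing constrains $b_i$.

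So an authorised corrupted buyer pays, reads $k_2$ from $\mathsf{KeyReleased}$, and decrypts, so $E_1$ holds. It then challenges with the genuine $c_i$ and valid Merkle paths but a wrong $b_i$ or a wrong $k_1$. Re-encryption consistency fails, and Algorithm~\ref{alg:contract} refunds $v$ and pays $d$ to the buyer, so $\neg E_2$. No collision or forgery is involved, and this adversary wins with probability $1$. Taken literally, $\mathsf{AES.Enc}$ draws a fresh nonce, so the check fails even on honest evidence.

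The paper's proof does not close this gap. Its Case~1 simply asserts that payment is transferred after the reveal and never considers the dispute window. You also cannot read the hypothesis that the contract ``correctly executes'' the audit as ``the audit's verdict is correct''. That reading would assume away both this attack and your sampling problem.

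To make your reduction valid, you must first change the audit so these inputs are not buyer-chosen. For example, the contract could decrypt $c_i$ itself, with the nonce parsed from $c_i$, instead of re-encrypting a supplied $b_i$. The buyer would also prove in zero knowledge that its $k_1$ equals $H_2(e(H_1(\tau'),C_0))$ for a $\tau'$ accepted by $S_0$. Only then does ``honest seller $\Rightarrow$ audit passes'' reduce to collision resistance of the Merkle tree and unforgeability of $\sigma_{\mathsf{root}}$.

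\textbf{The corrupted-seller case.} Your diagnosis here is correct and sharper than the paper's. The paper infers ``overwhelming'' detection from uniform sampling of $\mathcal{I}$. Yet its own bound $1-(1-1/m)^{\ell}$ stays non-negligibly below $1$ for a single bad block unless essentially all $m$ blocks are challenged. Your remedies, a buyer-named failing index or $\ell=m$, change the protocol or its parameters rather than prove the statement as written. The buyer-named index, the standard FairSwap-style fix, is sound only once the audit inputs are bound as described above.

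One smaller slip: in the no-reveal sub-case, $k_2$ is chosen by the corrupted seller, so one-wayness of $H_4$ says nothing there. $\neg E_1$ holds because an honest buyer forms $k$ only from a revealed $k_2$.
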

	
	\begin{proof}
		We show that the adversary can never make exactly one of the events
		$E_1$ (buyer obtains the decryption key) and $E_2$ (seller receives
		the payment) happen.
		
		\medskip
		\noindent \textbf{Case 1: Malicious buyer (honest seller).}
		To achieve $E_1 \land \neg E_2$ the adversary must obtain $k_2$ without
		the seller being paid.  The honest seller reveals $k_2$ on‑chain only
		after a valid buyer transaction is accepted.  A valid buyer transaction
		requires a proof $\pi$ that, by knowledge‑soundness, guarantees the
		prover holds a certified attribute set $\mathcal{A}_B$ with
		$|\mathcal{P}_B \cap \mathcal{A}_B| \ge t_b$ and a correctly reconstructed
		token $\tau'$.  Upon verification, the contract freezes the payment $v$.
		The honest seller then reveals the correct $k_2$, so $E_1$ becomes true
		and the payment is transferred, making $E_2$ true as well.  Hence
		$E_1 \land \neg E_2$ cannot occur.
		
		\medskip
		\noindent \textbf{Case 2: Malicious seller (honest buyer).}
		The seller aims for $E_2 \land \neg E_1$.  Three sub‑cases:
		\begin{enumerate}
			\item \emph{The seller does not reveal any key.}
			The challenge timer expires; the contract returns $v$ and transfers $d$
			to the buyer.  Hence $E_2$ is false and $E_1$ is false.
			\item \emph{The seller reveals a $k_2'$ that does not yield a valid
				decryption of $C_F$.}  The honest buyer detects the failure locally
			and submits a $\texttt{challenge}$ with the audit evidence described
			in Section~\ref{sec:audit-evidence}.  The contract verifies the Merkle
			proofs and the re‑encryption consistency check; because the seller
			supplied a wrong $k_2'$, the check fails for at least one challenged
			block.  Consequently the contract refunds $v$ to the buyer and
			transfers the collateral $d$ to the buyer, i.e., $E_2$ becomes false
			while $E_1$ remains false.  The adversary does not win.
			
			\item \emph{The seller reveals the correct $k_2$.}  Then $E_1$ is true
			and the payment is transferred, so $E_2$ is also true.
		\end{enumerate}
		The audit catches a cheating seller with overwhelming probability
		because $\mathcal{I}$ is chosen uniformly at random from $m$ blocks;
		the only way the seller could pass the audit while having tampered with
		the file is to forge a Merkle proof or break the signature on
		$\mathsf{root}$, which contradicts the collision‑resistance or
		unforgeability assumptions.
		
		Thus in every case $\Pr[E_1 \neq E_2] \le \mathsf{negl}(\lambda)$.
	\end{proof}

\end{document}